\documentclass[11pt]{article}
\usepackage[margin=1.3in]{geometry}

% our stuff
%
\usepackage[toc,page]{appendix}
\usepackage{graphicx}
\usepackage{hyperref}
\usepackage[T1]{fontenc}
\usepackage[latin1]{inputenc}
\usepackage{amsmath,amssymb,mathrsfs}
\usepackage{amsthm}
%\restylefloat{table}
\usepackage{marginnote}
\usepackage{multicol}
\usepackage{multirow}
\usepackage{tikz}
\usepackage{enumitem}
\usepackage{subfig}
\usepackage{enumitem}
\setlist{topsep=1mm,noitemsep}
\usepackage{algorithm}
\usepackage{algpseudocode,float}

\algtext*{EndIf}% Remove "end if" text

\makeatletter
\newenvironment{breakablealgorithm}
  {% \begin{breakablealgorithm}
   \begin{center}
     \refstepcounter{algorithm}% New algorithm
     \hrule height.8pt depth0pt \kern2pt% \@fs@pre for \@fs@ruled
     \renewcommand{\caption}[2][\relax]{% Make a new \caption
       {\raggedright\textbf{\ALG@name~\thealgorithm} ##2\par}%
       \ifx\relax##1\relax % #1 is \relax
         \addcontentsline{loa}{algorithm}{\protect\numberline{\thealgorithm}##2}%
       \else % #1 is not \relax
         \addcontentsline{loa}{algorithm}{\protect\numberline{\thealgorithm}##1}%
       \fi
       \kern2pt\hrule\kern2pt
     }
  }{% \end{breakablealgorithm}
     \kern2pt\hrule\relax% \@fs@post for \@fs@ruled
   \end{center}
  }
\makeatother

\newcommand{\bbR}{\mathbb{R}}
\newcommand{\obj}{\textup{obj}}
\newcommand{\bpogp}{(BPO)$'$}
\newcommand{\drop}[2]{\textup{drop}_{#1}({#2})}

\newcommand{\pos}{\mathscr{P}}
\newcommand{\negat}{\mathscr{N}}
\newcommand{\posneg}{\mathscr{PN}}
\newcommand{\negpos}{\mathscr{NP}}
\newcommand{\ie}{i.e., }
\newcommand{\eg}{e.g., }
\renewcommand{\mid}{\, : \,}

\usepackage{color,soul}
\setulcolor{red}
\usepackage{schemata}
\usepackage{commath}

\newtheorem{theorem}{Theorem}
\newtheorem{proposition}{Proposition}

\newtheorem{claim}{Claim}

\newtheorem{example}{Example}

\theoremstyle{remark}
\newtheorem{remark}{Remark}
%
%

%
%
%%\ifthenelse {\boolean{SODA}}
%%{
%%% For SODA begin
%
%% COMMENTED BC OF ltexpprt
%% Environment for proofs:
%%\newenvironment{prf}[1][]
%%{\begin{proof}}
%%{\qed \end{proof}}
%
%
%\newenvironment{cpfnodiamond}
%{\begin{trivlist} \item[] {\em Proof of claim}\space}
%{\end{trivlist}}
%
%% Notes:
%%\renewcommand{\nota}[1]{\color{blue} #1 \color{black}}
%
%% For SODA end
%\raggedbottom
%
%%\newcounter{remark}
%%\renewenvironment{remark}
%%{\refstepcounter{remark} \begin{trivlist} \item[] {\it Remark~\theclaim.}\space} %\itshape}
%%{$\hfill\diamond$ \end{trivlist}}
%
%
%%\renewenvironment{example}
%%{\refstepcounter{example} \begin{trivlist} \item[] {\em Example.}\space } %\itshape}
%%{$\hfill\diamond$ \end{trivlist}}
%
\newenvironment{cpf}[1][]
{\begin{trivlist} \item[] {\em #1.}}% {\em Proof of claim. }}
{$\hfill\diamond$ \end{trivlist}}

\raggedbottom

\begin{document}

\title{On the complexity of binary polynomial optimization over acyclic hypergraphs}

\author{Alberto Del~Pia
\thanks{Department of Industrial and Systems Engineering \& Wisconsin Institute for Discovery,
             University of Wisconsin-Madison, Madison, WI, USA.
             E-mail: {\tt delpia@wisc.edu}.
             }
\and
Silvia Di~Gregorio
\thanks{Faculty of Computer Science, TU Dresden, N\"othnitzer Stra{\ss}e 46, Dresden, 01187, Germany.
             E-mail: {\tt silvia.di\_gregorio@tu-dresden.de}.
             }}

\date{December 14, 2022}

\maketitle

\begin{abstract} %\small\baselineskip=9pt
In this work, we advance the understanding of the fundamental limits of computation for Binary Polynomial Optimization (BPO), which is the problem of maximizing a given polynomial function over all binary points. 
In our main result we provide a novel class of BPO that can be solved efficiently both from a theoretical and computational perspective.
In fact, we give a strongly polynomial-time algorithm for 
instances whose corresponding hypergraph is $\beta$-acyclic.
We note that the $\beta$-acyclicity assumption is natural in several applications including relational database schemes and the lifted multicut problem on trees.
Due to the novelty of our proving technique, we obtain an algorithm which is interesting also from a practical viewpoint.
This is because our algorithm is very simple to implement and the running time is a polynomial of very low degree in the number of nodes and edges of the hypergraph.
Our result completely settles the computational complexity of BPO over acyclic hypergraphs, since the problem is NP-hard on $\alpha$-acyclic instances.
Our algorithm can also be applied to any general BPO problem that contains $\beta$-cycles.
For these problems, the algorithm returns a smaller instance together with a rule to extend any optimal solution of the smaller instance to an optimal solution of the original instance.
\end{abstract}

%\emph{Key words:} binary polynomial optimization, strongly polynomial-time algorithm, acyclic hypergraphs, hardness of approximation

\section{Introduction}

In \emph{binary polynomial optimization} we seek a binary point that maximizes a given polynomial function.
This fundamental problem has a broad range of applications in several areas, including operations research, engineering, computer science, physics, biology, finance, and economics
(see \eg \cite{BorHam02,KocHaoGloLewLuWanWan14,dPKhaSah19}).
%The generality of this problem is highlighted by the fact that the problem of maximizing any given function over all binary points can be reformulated as a binary polynomial optimization problem, even though this transformation is generally not practical.

In order to formalize this optimization problem, a hypergraph representation is often used \cite{dPKha17}.
A \emph{hypergraph} $G$ is a pair $(V, E)$, where $V$ is the \emph{node set} and $E$ is the \emph{edge set}, which is a family of non-empty subsets of $V$.
We remark that the edge set $E$ may contain parallel edges and loops, as opposed to the setting considered in \cite{dPKha17,dPKha18MPA,dPKha21}. 
In the hypergraph representation, each node represents a variable of the given polynomial function, whereas every edge represents a monomial.
Therefore, any binary polynomial optimization problem can be formulated as
\begin{equation}\label{pb}
\tag{BPO}
\begin{aligned}
\max & \quad \sum_{v \in V} p_v x_v + \sum_{e \in E} p_e \prod_{v \in e} x_v \\
\text{s.t. } &\quad x \in \{0,1\}^V.    %x_v \in \{0,1\} \qquad \forall v \in V. 
\end{aligned}
\end{equation}
In this formulation, $x$ is the decision vector, and an instance comprises of a hypergraph $G = (V,E)$ together with a \emph{profit} vector $p \in \mathbb{Z}^{V \cup E}$.
We remark that a rational profit vector can be scaled to be integral by multiplying it by the least common multiple of the denominators and this transformation leads to a polynomial growth of the instance size (see Remark 1.1 in~\cite{ConCorZamBook}).

The main goal of this paper is that of advancing the understanding of the fundamental limits of computation for \eqref{pb}.
In fact, while there are several known classes of binary \emph{quadratic} optimization that are polynomially solvable (see for instance \cite{Bar86,Pad89,Cra93,Lau09,Mic21}), very few classes of higher degree \eqref{pb} are known to be solvable in polynomial-time.
These are instances that have:
(i) incidence graph or co-occurrence graph of fixed treewidth \cite{CraHanJau90,Lau09,BieMun18}, or
(ii) objective function whose restriction to $\{0, 1\}^n$ is supermodular (see Chapter 45 in \cite{SchBookCO}),
or
(iii) a highly acyclic structure \cite{dPKha21}, which we discuss in detail below.

Notice that, in the quadratic setting, the hypergraphs representing the instances are actually graphs.
It is known that instances over acyclic graphs can be solved in strongly polynomial time \cite{CraHanJau90}.
Motivated by this fact, it is natural to analyze the computational complexity of \eqref{pb} in the setting in which the hypergraph $G$ does not contain any cycle.
However, for hypergraphs, the definition of cycle is not unique.
As a matter of fact, one can define Berge-cycles, $\gamma$-cycles, $\beta$-cycles, and $\alpha$-cycles \cite{Fag83,JegNdo09}.
Correspondingly, one obtains Berge-acyclic, $\gamma$-acyclic, $\beta$-acyclic, and $\alpha$-acyclic hypergraphs, in increasing order of generality.
The definitions of $\beta$-acyclic and $\alpha$-acyclic hypergraph are given in Sections~\ref{sec: contrib} and~\ref{sec 1.2}, and we refer the reader to \cite{Fag83} for the remaining definitions.

In \cite{dPKha21}, Del Pia and Khajavirad show that it is possible to solve \eqref{pb} in polynomial-time if the corresponding hypergraph is kite-free $\beta$-acyclic.
It should be noted that this class of hypergraphs lies between $\gamma$-acyclic and $\beta$-acyclic hypergraphs.
This result is obtained via \emph{linearization,} which is a technique that consists in linearizing the polynomial objective function via the introduction of new variables $y_e$, for $e \in E$, defined by $y_e = \prod_{v \in e} x_v$.
This leads to an extended integer linear programming formulation in the space of the $(x,y)$ variables, which is obtained by replacing the nonlinear constraints $y_e = \prod_{v \in e} x_v$, for all $e \in E$, with the inequalities that describe its convex hull on the unit hypercube \cite{For}.
The convex hull of the feasible points is known as the \emph{multilinear polytope}, as defined in \cite{dPKha17}.
The tractability result in \cite{dPKha21} is then achieved by providing a linear programming extended formulation of the multilinear polytope of polynomial size.
The linearization technique also led to several other polyhedral results for \eqref{pb}, including \cite{dPKha17,CraRod17,dPKha18MPA,dPKha18SIOPT,BucCraRod18,dPKha21,dPDiG21,HojPfeWal19}.

A different approach to study binary polynomial optimization involves quadratization techniques \cite{Ros75,FreDri05,BucRin07,Ish09,Ish11}.
The common idea in the quadratization approaches is to add additional variables and constraints so that the original polynomial can be expressed in a higher dimensional space as a new quadratic polynomial.
The reason behind it is that, in this way, it is possible to exploit the vast literature available for the quadratic case.
An alternative approach is to use a different formalism altogether like pseudo-Boolean optimization \cite{HamRosRud63,HamRud68,CraHanJau90,BorHam02,BorGru12,BorCraRod20}.
Pseudo-Boolean optimization is a more general framework, as in fact the goal is to optimize set functions that admit closed algebraic expressions.

\subsection{A strongly polynomial-time algorithm for $\beta$-acyclic hypergraphs}
\label{sec: contrib}

Our main result is an algorithm that solves \eqref{pb} in strongly polynomial-time whenever the  hypergraph corresponding to the instance is $\beta$-acyclic.
To formally state our tractability result, we first provide the definition of $\beta$-acyclic hypergraph \cite{Fag83}.
A hypergraph is \emph{$\beta$-acyclic} if it does not contain any $\beta$-cycle. 
A \emph{$\beta$-cycle} of length $q$, for some $q \geq 3$, is a sequence $v_1$, $e_1$, $v_2$, $e_2$, $\dots$, $v_q$, $e_q$, $v_1$ such that $v_1$, $v_2$, $\dots$, $v_q$ are distinct nodes, $e_1$, $e_2$, $\dots$, $e_q$ are distinct edges, and $v_i$ belongs to $e_{i-1}, e_i$ and no other $e_j$ for all $i = 1,\dots,q$, where $e_0 = e_q$.

Our algorithm is based on a dynamic programming-type recursion.
The idea behind it is to successively remove a nest point from $G$, until there is only one node left in the hypergraph. 
In fact, we observe that optimizing the problem becomes trivial when there is only one node left.
A node $u$ of a hypergraph is a \emph{nest point} if for every two edges $e,f$ containing $u$, either $e \subseteq f$ or $f \subseteq e$.
Equivalently, the set of the edges containing $u$ is totally ordered.
Observe that, in connected graphs with at least two nodes, nest points coincide with leaves.
Therefore, nest points can be seen as an extension of the concept of leaf in a graph to the hypergraph setting.
Before going forward, we remark that finding a nest point in a hypergraph can be done in strongly polynomial-time by brute force \cite{OrdPauSze13}. 
We denote by $\tau$ the number of operations required to find a nest point, which is bounded by a polynomial in $\abs{V}$ and $\abs{E}$. 
We are now ready to state our main result.

\begin{theorem} \label{theobeta}
There is a strongly polynomial-time algorithm to solve \eqref{pb}, provided that the input hypergraph $G = (V, E)$ is $\beta$-acyclic.
In particular, the number of arithmetic operations performed is $O(\abs{V}(\tau + \abs{E} + \abs{V}\log\abs{E}))$.  
\end{theorem}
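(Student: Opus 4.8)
The plan is to establish Theorem~\ref{theobeta} by a dynamic programming recursion that eliminates one node at a time. The structural fact driving the recursion is that every nonempty $\beta$-acyclic hypergraph has a nest point, and that deleting a nest point (and appropriately contracting the edges through it) preserves $\beta$-acyclicity; these facts can be taken from \cite{OrdPauSze13}. So the algorithm repeatedly selects a nest point $u$, records enough information to recover an optimal assignment of $x_u$ once the rest of the variables are fixed, builds a strictly smaller $\beta$-acyclic instance on $V \setminus \{u\}$, and recurses; when a single node remains the problem is solved by inspecting the sign of its coefficient.

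The crux is the elimination step, so I would carry it out first. Fix a nest point $u$ and let $e_1 \subsetneq e_2 \subsetneq \dots \subsetneq e_k$ be the edges containing $u$, totally ordered by the nest-point property. Writing $f_i := e_i \setminus \{u\}$, I note that whether or not $x_u$ is forced to $0$ by a vanishing product depends only on the values of the variables in $f_k$ (the largest such set): if any variable in $f_k$ is $0$ then every term $p_{e_i}\prod_{v\in e_i}x_v$ vanishes regardless of $x_u$, and otherwise the contribution of $u$ to the objective, as a function of $x_u$, is $x_u\bigl(p_u + \sum_{i\,:\,f_i\subseteq \{v\,:\,x_v=1\}} p_{e_i}\bigr)$, which is maximized by setting $x_u=1$ exactly when that coefficient is positive. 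Thus $x_u$'s optimal value is governed by a threshold structure on the chain $f_1\subseteq\dots\subseteq f_k$, and I can encode the resulting reward by introducing new lower-degree terms supported on the sets $f_i$ together with at most $k$ new edges. The care here is to show this reindexing of profits can be done with $O(|E|)$ new terms over the whole run (each $e_i$ contributes once, amortized), that the new hypergraph is still $\beta$-acyclic, and that sorting/merging the chain costs $O(|V|\log|E|)$ per elimination — which is exactly where the $|V|\log|E|$ term in the stated bound comes from.

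Next I would verify correctness by induction on $|V|$: the value returned equals the optimum because, for any fixed assignment to $V\setminus\{u\}$, the best completion of $x_u$ is precisely the one the elimination rule records, so $\max$ over the reduced instance equals $\max$ over the original; and the stored rule reconstructs an optimal $x_u$ in $O(1)$ additional work per node during a final back-substitution sweep. Finally I would tally the running time: there are $|V|-1$ elimination rounds; each round costs $\tau$ to locate a nest point, plus $O(|E|)$ to touch the incident edges and update profits, plus $O(|V|\log|E|)$ to maintain the sorted chain structure, giving $O(|V|(\tau+|E|+|V|\log|E|))$ arithmetic operations, and strong polynomiality follows because every number manipulated is a sum of at most $|E|$ input profits, hence polynomially bounded in bit-size.

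The main obstacle I anticipate is the bookkeeping in the elimination step: making the "collapse the chain $f_1\subseteq\dots\subseteq f_k$ into threshold terms" argument precise while (a) not blowing up the number of edges, (b) genuinely preserving $\beta$-acyclicity of the reduced hypergraph (so that the next round still has a nest point), and (c) keeping all intermediate coefficients of polynomial size for strong polynomiality. Everything else — the base case, the induction, the final complexity count — is routine once the elimination step is nailed down.
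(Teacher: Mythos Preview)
Your approach is the paper's: repeatedly delete a nest point, encode the optimal choice of $x_u$ into modified profits on the reduced hypergraph, and use the characterization that $\beta$-acyclic hypergraphs admit a nest-point elimination ordering. Two corrections are worth making before you try to fill in the details.

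First, the sentence ``if any variable in $f_k$ is $0$ then every term $p_{e_i}\prod_{v\in e_i}x_v$ vanishes regardless of $x_u$'' is false: since $f_1\subseteq\cdots\subseteq f_k$, a zero in $f_k\setminus f_1$ kills only the terms with large enough $i$. Your subsequent formula with $\mu$ is correct and already handles every case, so drop the dichotomy. Second, and more importantly, no new edges are ever introduced. The reduced hypergraph $G'$ is obtained by simply deleting $u$, so its edge set is $\{e\setminus\{u\}:e\in E\}$, and the modified profits $p'_{e_i}$ live on the existing edges $f_i$. This dissolves your worries (a) and (b): the edge count never grows, and $\beta$-acyclicity of $G'$ is immediate from the elimination-ordering characterization itself. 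What remains, and what you correctly flag as the obstacle, is the explicit profit formula. Writing $S_i:=p_u+\sum_{j\le i}p_{e_j}$, the paper partitions the indices by the sign pattern of the sequence $(S_i)$ into ``positive flip'', ``negative flip'', and the two in-between states, and sets $p'_{e_i}$ to one of $0$, $S_i$, $p_{e_i}$, or $-S_{i-1}$ accordingly, so that $\sum_{i\le\mu}p'_{e_i}=\max(0,S_\mu)$ up to an additive constant independent of $\mu$; verifying this telescoping identity is the heart of the correctness proof. Finally, the $O(|V|\log|E|)$ term in the running time does not come from sorting the chain (the chain is delivered sorted when the nest point is found, and that cost is inside $\tau$); it comes from binary-searching the chain to locate $\mu(x')$ during reconstruction, each probe costing $O(|V|)$ to test whether all variables in some $f_i$ equal $1$.
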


The description of the algorithm and the proof of Theorem~\ref{theobeta} can be found in Section~\ref{sec: hypergraph}.
Theorem~\ref{theobeta} provides a novel class of \eqref{pb} that can be solved efficiently both from a theoretical and computational perspective.
In fact, this class of problems is not contained in the classes (i), (ii), or (iii) for which a polynomial-time algorithm was already known. 
This can be seen because a laminar hypergraph $G=(V,E)$ with edges $e_1 \subseteq e_2 \subseteq \dots \subseteq e_m = V$ is $\beta$-acyclic and does not satisfy the assumptions in (i).
%One can find several polynomials functions whose restriction to $\{0,1\}^n$ is not supermodular such that the corresponding hypergraph is $\beta$-acyclic.
%
%Furthermore, we do not require the restriction to $\{0,1\}^n$ of the objective function to be supermodular as in (ii).
%Finally, it is well understood that the class of $\beta$-acyclic hypergraphs significantly extends the class of kite-free $\beta$-acyclic hypergraphs in (iii).
Furthermore, it is simple to see that there exist polynomials whose restriction to $\{0,1\}^n$ is not supermodular and the corresponding hypergraph is $\beta$-acyclic.
Finally, it is well-known that the class of $\beta$-acyclic hypergraphs significantly extends the class of kite-free $\beta$-acyclic hypergraphs.

The concept of $\beta$-acyclicity is not interesting only in a theoretical context.
To the contrary, this assumption is quite natural in several real world applications.
A thorough discussion of this topic is not in the scope of this paper, where instead we only mention a couple of examples.
In the study of relational database schemes, the $\beta$-acyclicity assumption is renowned to be advantageous \cite{Fag83Painless}.
In fact, a number of basic and desirable properties in database theory turn out to be equivalent to acyclicity. 
A second example is given by the lifted multicut problem on trees, where the problem can be equivalently formulated via binary polynomial optimization \cite{LanAnd21}. 
The goal of the lifted multicut problem is to partition a given graph in a way that minimizes the total cost associated with having different pairs of nodes in different components. 
This problem has been shown to be very useful in the field of computer vision, in particular when applied to image segmentation~\cite{Beietal17}, object tracking~\cite{Tangetal17}, and motion segmentation~\cite{Keu17}. 
Even when the underlying graph is a tree, the lifted multicut problem is NP-hard. However it can be solved in polynomial time when we focus on paths rather than on trees. 
It is simple to observe that this special case is formulated with a polynomial whose hypergraph is $\beta$-acyclic.
Lastly, we observe that these $\beta$-acyclic hypergraphs can exhibit kites, and therefore do not fit into the previous studies \cite{dPKha21}.

The interest of Theorem~\ref{theobeta} also lies in the novelty of the proving technique
with respect to the other recent results in the field previously mentioned. 
In particular, our algorithm does not rely on linear programming, extended formulations, polyhedral relaxations, or quadratization. 
This in turn leads to two key advantages.
First, our algorithm is very simple to implement.
Second, we obtain a \emph{strongly} polynomial time algorithm (as opposed to a \emph{weakly} polynomial time algorithm) and
the running time is a polynomial of very low degree in the number of nodes and edges of the hypergraph.
These two key points contribute to making our algorithm interesting also from a practical viewpoint.
Furthermore, we remark that it is possible to recognize efficiently when \eqref{pb} is represented by a $\beta$-acyclic hypergraph~\cite{Fag83}.

Theorem~\ref{theobeta} has important implications in polyhedral theory as well. 
In particular, it implies that one can optimize over the multilinear polytope for $\beta$-acyclic hypergraphs in strongly polynomial-time.
By the polynomial equivalence of separation and optimization (see, \eg \cite{ConCorZamBook}), for this class of hypergraphs, the separation problem over the multilinear polytope can be solved in polynomial-time.

We remark that our algorithm in Theorem~\ref{theobeta} can be applied also to hypergraphs that are not $\beta$-acyclic.
In this case, the algorithm does not return an optimal solution to the given instance.
However, it returns a smaller instance together with a rule to construct an optimal solution to the original instance, given an optimal solution to the smaller instance.
Therefore, our algorithm can be used as a reduction scheme to decrease the size of a given instance.
Via computational experiments, we generate random instances and study the magnitude of this decrease.
In particular, the results of our simulations show that the percentage of removed nodes is on average $50\%$ whenever the number of the edges is half the number of nodes.
We discuss in detail this topic in Section~\ref{sec: computation}.

\subsection{Settling the complexity of \eqref{pb} over acyclic hypergraphs}
\label{sec 1.2}

Theorem~\ref{theobeta} allows us to completely settle the computational complexity of \eqref{pb} over acyclic hypergraphs.
More specifically, it can be seen that two hardness results hold for \eqref{pb} when the input hypergraphs belong to the next class of acyclic hypergraphs, in increasing order of generality, that is the one of $\alpha$-acyclic hypergraphs.
Several equivalent definitions of $\alpha$-acyclic hypergraphs are known (see, \eg \cite{BeFaMaYa83,Fag83,Bra14}).
In the following, we will use the characterization stated in Theorem~\ref{charactalphaacyclic} below.
This characterization is based on the concept of removing nodes and edges from a hypergraph.
When we \emph{remove} a node $u$ from $G=(V,E)$ we are constructing a new hypergraph $G' = (V',E')$ with $V' = V \setminus \{ u \}$ and $E' = \{ e \setminus \{u\} \mid e \in E, \ e \neq \{u\} \}$.
Observe that when we remove a node we might be introducing loops and parallel edges in the hypergraph.
%We start by defining the operation of removing an edge from a hypergraph.
When we \emph{remove} an edge $f$ from $G=(V,E)$, we construct a new hypergraph $G' = (V,E')$, where $E' = E \setminus \{ f \}$.
%We remind the reader that the operation of removing a node is explained in Section~\ref{sec: hypergraph}.

\begin{theorem}[\cite{BeFaMaYa83}]
\label{charactalphaacyclic}
A hypergraph $G$ is $\alpha$-acyclic if and only if the empty hypergraph $(\emptyset, \emptyset)$ can be obtained by applying the following two operations repeatedly, in any order: 
\begin{enumerate}
\item
if a node $v$ belongs to at most one edge, then remove $v$;
\item
if an edge $e$ is contained in another edge $f$, then remove $e$.
\end{enumerate}
\end{theorem}

%We start by defining the notion of $\alpha$-acyclic hypergraph~\cite{JegNdo09}.
%We say that two distinct edges $e$ and $f$ with $e \cap f \neq \emptyset$ are \emph{$\alpha$-neighboring} if there is no sequence $g_1, g_2, \dots, g_q$ such that $e = g_1$, $f = g_q$, $q \geq 3$ and $e \cap f \subset g_i \cap g_{i+1}$, for $i=1,\dots,q-1$.
%Then, we define an \emph{$\alpha$-path} as a sequence of edges $e_1,\dots, e_q$ such that $e_i$ and $e_{i+1}$ are $\alpha$-neighboring for every $i = 1, \dots, q-1$.
%Lastly, an \emph{$\alpha$-cycle} is an $\alpha$-path $e_1, \dots, e_q$ such that $q \geq 4$, $e_1 = e_q$, and there is no pair $(i,j)$ such that $1 \leq i < j < q$, $e_{i} \cap e_{i+1} \subseteq e_j \cap e_{j+1}$.
%We say that a hypergraph is \emph{$\alpha$-acyclic} if it has no $\alpha$-cycles.

We claim that both Simple Max-Cut and Max-Cut can be formulated as special cases of \eqref{pb} where the hypergraphs representing the problems are $\alpha$-acyclic.
It is well-known that both these problems can be formulated as binary quadratic problems \cite{ConCorZamBook}.
Then, we define the corresponding instance of \eqref{pb} starting from the graph representing the instance of the binary quadratic problem.
Namely, we construct the hypergraph by adding to the graph one edge of weight zero that contains all the nodes. 
Theorem~\ref{charactalphaacyclic} implies that such hypergraph is $\alpha$-acyclic.
At this point, it can be seen that the corresponding instance of \eqref{pb} is equivalent to the original quadratic instance.
Therefore, the known hardness results of Simple Max-Cut and Max-Cut \cite{GarJohSto,TreSorSudWil} transfer to this setting, yielding the following hardness result.

\begin{theorem}
\label{NPhard-intro}
%Let $G = (V,E)$ be an $\alpha$-acyclic hypergraph.
%Then, 
\eqref{pb} over $\alpha$-acyclic hypergraph is strongly NP-hard.
Furthermore, it is NP-hard to obtain an $r$-approximation for \eqref{pb}, with $r > \frac{16}{17} \approx 0.94$. %, unless P $=$ NP.
\end{theorem}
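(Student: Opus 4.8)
The plan is to reduce from the (Simple) Max-Cut problem on graphs, exploiting the fact that augmenting a graph with a single hyperedge spanning all of its nodes destroys every $\alpha$-cycle while leaving the optimization problem essentially untouched. Concretely, recall that both Simple Max-Cut and Max-Cut admit standard formulations as binary \emph{quadratic} problems: for a (possibly integer-weighted) graph $H=(V,F)$, the cut induced by $S\subseteq V$ has value $\sum_{\{i,j\}\in F} w_{ij}(x_i+x_j-2x_ix_j)$, where $x$ is the characteristic vector of $S$; see \cite{ConCorZamBook}. I would set up the corresponding instance of \eqref{pb} by placing the linear terms $x_i+x_j$ into the $\sum_{v\in V}p_v x_v$ part, the products $x_ix_j$ into edges of size two, and then \emph{adding one extra edge $e^\ast=V$ with profit $p_{e^\ast}=0$}. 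All resulting profits are sums of the integer weights together with the constants $\pm 2$ and $0$, hence integral and polynomially bounded.

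For every $x\in\{0,1\}^V$ the monomial $p_{e^\ast}\prod_{v\in V}x_v$ evaluates to $0$, so the objective of this \eqref{pb} instance coincides pointwise with the Max-Cut objective; in particular the two problems share the same optimal value and the same set of optimal solutions, and the reduction is computable in strongly polynomial time with no growth in the magnitude of the coefficients.

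Next I would verify that the hypergraph $G=(V,F\cup\{e^\ast\})$ is $\alpha$-acyclic. The cleanest route is the GYO/Graham-reduction characterization of $\alpha$-acyclicity \cite{BeFaMaYa83}: since $e^\ast=V$ contains every other edge, one first removes every edge of $F$ because it is subsumed by $e^\ast$, leaving the single edge $e^\ast$; then every node lies in exactly one remaining edge and can be removed, reducing $G$ to the empty hypergraph, which certifies $\alpha$-acyclicity (equivalently, the single node $\{V\}$ is a join tree for $G$). Note that the universal edge is genuinely necessary: a plain graph containing a triangle is already an $\alpha$-cycle, so one could not simply take $H$ itself.

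Finally I would transfer the hardness. Applying the reduction to Simple Max-Cut, which is strongly NP-hard \cite{GarJohSto}, and using that the reduction preserves the objective value exactly while keeping all numbers polynomially bounded, \eqref{pb} over $\alpha$-acyclic hypergraphs is strongly NP-hard. Applying the reduction to weighted Max-Cut, any $r$-approximate solution of the constructed \eqref{pb} instance is, by the pointwise identity of the objectives, an $r$-approximate solution of the original Max-Cut instance, so the $\frac{16}{17}$ inapproximability threshold for Max-Cut \cite{TreSorSudWil} transfers verbatim. The only genuinely delicate point is the $\alpha$-acyclicity verification: one must either invoke the GYO-reduction (equivalently, join-tree) characterization carefully, or else unwind the definition of $\alpha$-cycle from Section~\ref{sec 1.2} and check directly that no $\alpha$-path can close into an $\alpha$-cycle once $e^\ast$ is present — slightly tedious because of the "no subsumed intersection pair" condition. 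Everything else (equivalence of the two problems, polynomiality, coefficient sizes) is routine.
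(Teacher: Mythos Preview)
Your proposal is correct and follows essentially the same route as the paper: reduce from (Simple) Max-Cut by writing the cut value as a binary quadratic, append a single zero-profit edge $e^\ast=V$, and certify $\alpha$-acyclicity via the GYO/Graham reduction characterization of \cite{BeFaMaYa83}, then transfer strong NP-hardness from \cite{GarJohSto} and the $16/17$ inapproximability bound from \cite{TreSorSudWil}. The paper carries this out in Proposition~\ref{reduction} and Theorems~\ref{NPhard} and~\ref{thm: hardapprox}, with the only additional bookkeeping being an explicit definition of $r$-approximation relative to the minimum objective value $l(I)$ so that the approximation ratio indeed transfers (here $l(I)=0$ on both sides since the objectives coincide pointwise).
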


The reduction just described shows that the statement of Theorem~\ref{NPhard-intro} holds even if the values of the objective function belong to a restricted subset.
The interested reader can find more details in Section~\ref{app: alpha}. 
Together, Theorem~\ref{theobeta} and Theorem~\ref{NPhard-intro} completely settle the computational complexity of binary polynomial optimization over acyclic hypergraphs.

\section{A strongly polynomial-time algorithm for $\beta$-acyclic hypergraphs}
\label{sec: hypergraph}

In this section we present the general algorithm for $\beta$-acyclic instances.
We start with a simple discussion to provide some intuition about why and how the algorithm works.
We are indebted to an anonymous reviewer for providing this simple interpretation.
In the following discussion, we denote by $\obj(x)$, for $x \in \bbR^V$, the objective function of \eqref{pb}, and we let $u \in V$.
Factoring out variable $x_u$ from the monomials in $\obj(x)$ that contain it, 
%the objective value of \eqref{pb} can be written in the form 
we write $\obj(x)$ in the form
$$
\obj(x) = x_u q(x') + r(x'),
$$
%\nota{questo factoring out si pu\`o fare anche se $u$ non \`e un nest point. Visto che in questa prima parte siamo molto generali ($\bbR^V$ invece di $\{0,1\}^V$) si potrebbe togliere anche $u$ being a nest point e spostarlo direttamente alla frase ``Since $u$ is a nest point, the monomials in $f(x')$ are totally ordered.''}
where $x' \in \bbR^{V\setminus \{u\}}$ is obtained from $x$ by dropping the component $x_u$, and where $q$ and $r$ are polynomials from $\bbR^{V\setminus \{u\}}$ to $\bbR$.
If $u$ is a nest point of the hypergraph $G$, the monomials in $q$ are totally ordered.
This special structure allows us to obtain efficiently a new polynomial $f$ from $\bbR^{V\setminus \{u\}}$ to $\bbR$ such that, for every $x' \in \{0,1\}^{V\setminus \{u\}}$, we have
$$
f(x') = 
\begin{cases}
q(x') & \text{if $q(x') > 0$} \\
0 & \text{if $q(x') \le 0$}.
\end{cases}
$$
%$$
%\max \{\obj(x) : x_u \in \{0,1\}\} = g(x') + r(x').
%$$
The construction of the polynomial $f$ is nontrivial, and 
%In order to construct the polynomial $f$, it suffices to update the profits of the monomials that contained the removed variable $x_u$.
a large part of the next section will be devoted to obtaining its coefficients.
%\nota{Add link with algorthim: recursive}
Assume now that we have an optimal solution ${x'}^*$ to the optimization problem, with one fewer variable, defined by
\begin{equation*}
%\label{pb}
%\tag{BPO}
\begin{aligned}
\max & \quad f(x') + r(x') \\
\text{s.t. } & \quad x' \in \{0,1\}^{V\setminus \{u\}}.    %x_v \in \{0,1\} \qquad \forall v \in V. 
\end{aligned}
\end{equation*}
%$\max \{ g(x') + r(x') : x' \in \{0,1\}^{V\setminus \{u\}}\}$.
Due to the property of the function $f(x')$, the vector $x^*$, obtained from ${x'}^*$ by adding component 
$$
x^*_u := 
\begin{cases}
1 & \text{if $q({x'}^*) > 0$} \\
0 & \text{if $q({x'}^*) \le 0$},
\end{cases}
$$
is an optimal solution to \eqref{pb}.

This idea is then used recursively to remove one variable at every iteration.
Since the hypergraph is $\beta$-acyclic, at each iteration there is a nest point, and so this recursion can be applied until only one variable remains.
At that point the problem can be solved trivially, and the construction of the optimal solution is performed in the reverse order.

%\nota{controllare notazione. per esempio, replace $f$ with $q$?
%Silvia: $q$ usato solo per definire $\beta$-cicli di lunghezza $q$ e come indice nella dimostrazione di Claim~2 ma mai come funzione. Del resto anche $r$ \`e usato come indice nelle somme della riga 13 dell'algoritmo.
%Quindi potremmo scrivere $\obj(x) = x_u q(x') + r(x')$ e usare $f$ al posto di $g$.}

\subsection{Description of the algorithm}

In this section we present the detailed description of our algorithm.
Our algorithm makes use of a characterization of $\beta$-acyclic hypergraphs, which is based on the concept of removing nest points from the hypergraph.
We remind the reader that the operation of removing a node is explained in Section~\ref{sec 1.2}.
We are now ready to state this characterization of $\beta$-acyclic hypergraphs.

\begin{theorem}[\cite{Dur12}]\label{charact}
A hypergraph $G$ is $\beta$-acyclic if and only if after removing nest points one by one we obtain the empty hypergraph $(\emptyset,\emptyset)$.
\end{theorem}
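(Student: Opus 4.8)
The plan is to establish the two implications separately, each via a short lemma describing how the deletion of a node interacts with $\beta$-cycles. First I would record the auxiliary fact that \emph{deleting any node from a hypergraph cannot create a $\beta$-cycle}: if $G' = G - u$ contains a $\beta$-cycle $v_1, e_1', v_2, \dots, v_q, e_q', v_1$, then, writing each $e_i'$ as $e_i \setminus \{u\}$ for its pre-image $e_i \in E$, the sequence $v_1, e_1, v_2, \dots, v_q, e_q, v_1$ is a $\beta$-cycle of $G$: the nodes are unchanged and distinct, the edges $e_i$ are pairwise distinct because the $e_i'$ are, and the incidence condition ``$v_i$ lies in $e_{i-1}, e_i$ and in no other $e_j$'' survives because a node $v_i \neq u$ lies in $e_j$ if and only if it lies in $e_j'$. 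This argument does not use that $u$ is a nest point, so it also shows at once that every node-deleted subhypergraph of a $\beta$-acyclic hypergraph is again $\beta$-acyclic.

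For the direction ``$\beta$-acyclic $\Rightarrow$ the successive deletion empties the hypergraph'' I would induct on $|V|$. The base case $V = \emptyset$ is trivial. For the inductive step I need the key lemma that \emph{every non-empty $\beta$-acyclic hypergraph has a nest point}; granting it, pick such a nest point $u$, delete it, observe that $G - u$ is still $\beta$-acyclic by the auxiliary fact and has one fewer node, and invoke the inductive hypothesis. Since a nest point is always available at each step the process never gets stuck and terminates at $(\emptyset, \emptyset)$, and the outcome does not depend on which nest points are chosen.

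For the converse, suppose $G$ is not $\beta$-acyclic and fix a $\beta$-cycle $C = v_1, e_1, \dots, v_q, e_q, v_1$. I would first show that \emph{no $v_i$ is a nest point of $G$}: if $v_i$ were, then the two edges $e_{i-1}, e_i$ through it would be comparable, say $e_{i-1} \subseteq e_i$; but then $v_{i-1} \in e_{i-1} \subseteq e_i$, contradicting that $v_{i-1}$ meets only $e_{i-2}$ and $e_{i-1}$ among the cycle edges, since $q \geq 3$ makes $e_i, e_{i-1}, e_{i-2}$ distinct; the case $e_i \subseteq e_{i-1}$ is symmetric using $v_{i+1} \in e_i \subseteq e_{i-1}$. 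Hence whenever we delete a nest point $u$ from a hypergraph possessing a $\beta$-cycle, $u$ lies outside that cycle, the images $e_i \setminus \{u\}$ are still non-empty (each contains $v_i$ and $v_{i+1}$), pairwise distinct, and carry the same incidences, so a $\beta$-cycle persists. By induction along the deletion sequence every hypergraph obtained from $G$ has a $\beta$-cycle, hence is non-empty, so the empty hypergraph is never reached.

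The main obstacle is the key lemma used in the forward direction, namely that a non-empty $\beta$-acyclic hypergraph admits a nest point; this is exactly where $\beta$-acyclicity does real work and is the substance of the cited result of Duris. One would prove it by contradiction: if every node lies in two incomparable edges, one starts at a node, follows an incomparability witness to a fresh node, iterates, and extracts from the resulting walk a genuine $\beta$-cycle; making this extraction clean --- ensuring the visited nodes and edges are distinct and obey the strict incidence pattern required of a $\beta$-cycle --- is the delicate point. A secondary but pervasive nuisance is the bookkeeping with the loops and parallel edges that node deletion introduces: one must fix the convention that ``distinct edges'' in a $\beta$-cycle means distinct members of the edge family and check that degenerate edges (singletons, repeated sets) neither disrupt the preservation arguments above nor spuriously create $\beta$-cycles.
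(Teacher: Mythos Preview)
The paper does not prove this theorem; it is quoted from \cite{Dur12} as a known characterization and used as a black box, so there is no in-paper argument to compare against.

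On its own merits your sketch is structurally correct. The preservation lemma (node deletion cannot create a $\beta$-cycle) and the converse (no vertex of a $\beta$-cycle is a nest point, so deleting nest points keeps a $\beta$-cycle alive) are both right; your incidence check ``$v_i \in e_j' \Leftrightarrow v_i \in e_j$ for $v_i \neq u$'' already forces the surviving cycle edges to be pairwise distinct as sets, so the parallel-edge worry you flag actually resolves itself. Where the real content sits is exactly where you place it: the lemma that every non-empty $\beta$-acyclic hypergraph has a nest point. Your outlined extraction argument --- walk along incomparability witnesses and close up a cycle --- is the right intuition, but turning that walk into a genuine $\beta$-cycle (distinct nodes, distinct edges, and the strict ``in exactly two consecutive edges'' condition) rather than merely a Berge-cycle is the delicate part of Duris's proof; as written this step is only gestured at, so the proposal is a correct reduction to the cited lemma rather than a self-contained proof.
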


We observe that Theorem \ref{charact} does not depend on the particular choice of the nest point to be removed at each step.
Theorem \ref{charact} implies that, for our purposes, it suffices to understand how to
reduce an instance of the problem to one obtained by removing a nest point $u$.
In particular, realizing how to update the profit vector is essential.
Once we solve the instance of the new problem without $u$, we decide whether to set the variable corresponding to $u$ to zero or one depending on the values of the variables of the other nodes in the edges containing $u$, which are given by the solution of the smaller problem.

Before describing the algorithm, we explain some notation that will be used in this section.
Let $u \in V$ be a nest point contained in $k$ edges.
Without loss of generality, we can assume that these edges are $e_1$, $e_2$, \dots, $e_k$ and that $e_1 \subseteq e_2 \subseteq \dots \subseteq e_k$.
For simplicity of notation, we denote by $e_0$ the set $\{ u \}$ and by $p_{e_0}$ the profit $p_u$. 
Moreover, we clearly have $e_0 \subseteq e_1$.
We will divide the subcases to consider based on the sequence of the signs of 
$$
p_{e_0}, \quad p_{e_0} + p_{e_1}, \quad p_{e_0} + p_{e_1} + p_{e_2}, \quad \dots \ , \quad p_{e_0} + p_{e_1} + \dots + p_{e_k}.
$$
Note that the number of subcases can be exponential in the number of edges, however we find a compact formula for the optimality conditions, which in turn yields a compact way to construct the new profit vector $p'$ for the hypergraph $G' = (V',E')$ obtained by removing $u$ from $G$.
We say that there is a \emph{flip} in the sign sequence whenever the 
sign of the sequence changes. More precisely, a flip is \emph{positive} if the sign sequence goes from non-positive to positive and the previous non-zero value of the sequence is negative.
Similarly, we say that a flip is \emph{negative} if the sequence goes from non-negative to negative and the previous non-zero value of the sequence is positive.
We say that an edge $e_i$ \emph{corresponds to a flip} in the sign sequence, if there is a flip between $\sum_{j=0}^{i-1} p_{e_j}$ and $\sum_{j=0}^{i} p_{e_j}$.

In order to describe the several cases easily, in a compact way, we partition the indices $0, \dots, k$ into four sets $\pos$, $\negat$, $\negpos$, and $\posneg$.
The first two sets are defined by 
%Let 
\begin{align*}
\pos := & \ \{ i \mid i = 1,\dots,k, \ e_i \text{ corresponds to a positive flip} \}   ,     \\
\negat := & \ \{ i \mid i= 1,\dots,k, \ e_i \text{ corresponds to a negative flip} \}   .
\end{align*}
If there is at least one flip, the sets $\negpos$, and $\posneg$ are defined  as follows:
\begin{align*}
%\pos := & \ \{ i_j \mid j = 1,\dots,f, \ e_{i_j} \text{ corresponds to a positive flip} \}   ,     \\
%\negat := & \ \{ i_j \mid j= 1,\dots,f, \ e_{i_j} \text{ corresponds to a negative flip} \}   ,    \\
\negpos := 
%\ \{ h \mid \exists \, i_j \in \negat \text{ s.t. } i_j +1 \leq h \leq i_{j+1}-1 \} \cup \{ 0, \dots, i_1 - 1 \mid \text{if } p_{e_0} \leq 0 \} ,      \\
& \ \{ 0, \dots, s - 1 \mid \text{$s$ is the first flip and $s \in \pos$} \}       \\
%& \ \{ i  \mid \text{$s$ is the first flip and it is in  $\pos$} 0 \le i \le  s - 1 \}       \\
& \ \cup \{ i \mid \exists \, \text{two consecutive flips  $s \in \negat$, $t \in \pos$ s.t. } s +1 \leq i \leq t-1 \} \\ 
& \ \cup \{ t+1, \dots, k \mid \text{if $t$ is the last flip and $t \in \negat$} \} ,      \\
%\posneg := & \ \{ h \mid \exists \, i_j \in \pos \text{ s.t. } i_j +1 \leq h \leq i_{j+1}-1 \} \cup \{ 0, \dots, i_1 - 1 \mid \text{if } p_{e_0} > 0 \}  .
\posneg := 
%\ \{ h \mid \exists \, i_j \in \negat \text{ s.t. } i_j +1 \leq h \leq i_{j+1}-1 \} \cup \{ 0, \dots, i_1 - 1 \mid \text{if } p_{e_0} \leq 0 \} ,      \\
& \ \{ 0, \dots, s - 1 \mid \text{$s$ is the first flip and $s \in \negat$} \}       \\
& \ \cup \{ i \mid \exists \, \text{two consecutive flips  $s \in \pos$, $t \in \negat$ s.t. } s +1 \leq i \leq t-1 \} \\ 
& \ \cup \{ t+1, \dots, k \mid \text{if $t$ is the last flip and $t \in \pos$} \}.      
\end{align*}
Otherwise, if there is no flip, we define
\begin{align*}
%\pos := & \ \{ i_j \mid j = 1,\dots,f, \ e_{i_j} \text{ corresponds to a positive flip} \}   ,     \\
%\negat := & \ \{ i_j \mid j= 1,\dots,f, \ e_{i_j} \text{ corresponds to a negative flip} \}   ,    \\
\negpos := 
%\ \{ h \mid \exists \, i_j \in \negat \text{ s.t. } i_j +1 \leq h \leq i_{j+1}-1 \} \cup \{ 0, \dots, i_1 - 1 \mid \text{if } p_{e_0} \leq 0 \} ,      \\
& \ \{ 0, \dots, k \mid \text{if $p_{e_0} \le 0$} \}    ,   \\
\posneg := 
& \ \{ 0, \dots, k \mid \text{if $p_{e_0} > 0$} \}    .
\end{align*}

\begin{remark}\label{ordine}
We observe that  the indices $\{0,1,\dots,k\}$ cycle between $\negpos$, $\pos$, $\posneg$, $\negat$ following this order.
In fact, if $i \in \pos$ then the following indices must be in $\posneg$ until we reach an index that belongs to $\negat$.
Similarly, if $i \in \negat$ the indices after $i$ must belong to $\negpos$ until there is an index in $\pos$.
Note that it can happen that there
is an index in $\pos$ and the next index is in $\negat$.
If this happens, then there are no indices in $\posneg$ between these two indices.
Similarly, it may happen that there is an index in $\negat$ followed immediately by an index in $\pos$.
Moreover, the index $0$ belongs to either $\negpos$ or $\posneg$.
$\hfill\diamond$
\end{remark}
\begin{example}
Let us give an example to clarify the meaning of 
%$f$, $\{i_j\}_{j=1}^f$, and of 
the sets $\pos$, $\negat$, $\negpos$, and $\posneg$.
Consider a nest point $u$, contained in the edges $e_1$, $e_2$, $e_3$, $e_4$, $e_5$ such that $e_1 \subseteq e_2 \subseteq e_3 \subseteq e_4 \subseteq e_5$.
Assume that $p_{e_0} = 3$, $p_{e_1} = -3$, $p_{e_2} = 1$, $p_{e_3} = -2$, $p_{e_4} = 3$, $p_{e_5} = 2$.
We can check that $p_{e_0} = 3 > 0$, $p_{e_0} +$ $p_{e_1} = 0$, $p_{e_0} +$ $p_{e_1} +$ $p_{e_2} = 1 > 0$, $p_{e_0} + p_{e_1} + p_{e_2} + p_{e_3} = -1 < 0$,  $p_{e_0} + p_{e_1} + p_{e_2} + p_{e_3} + p_{e_4} = 2 > 0$ and finally $p_{e_0} +$ $p_{e_1} +$ $p_{e_2} +$ $p_{e_3} +$ $p_{e_4} +$ $p_{e_5} = 4 > 0$.
%Here $f = 3$ and the sequence $\{i_j\}_{j=1}^{f}$ is given by $i_1 = 2$, $i_2 = 3$, $i_3 = 5$.
The indices $0,\dots,5$  are partitioned in the sets $\posneg = \{ 0, 1, 2, 5 \}$, $\negat = \{ 3 \}$, $\negpos = \emptyset$, $\pos = \{ 4 \}$.
Observe that here there are no indices in $\negpos$ when we go from the negative flip corresponding to $e_3$ to the next positive flip, which corresponds to $e_4$.
$\hfill\diamond$
\end{example}

Our algorithm acts differently whether all the edges containing the nest point $u$ are loops or not.
Let us now consider the case where $u$ is contained not only in loops.
In this case, for a vector $x' \in \{0,1\}^{V'}$, we define $\varphi (x') \in \{0,1\}$ that will assign the optimal value to the variable corresponding to the nest point $u$, given the values of the variables corresponding to the nodes in $V'$.
We denote by $\mu = \mu(x')$ the largest index $i \in \{ 0, \dots,k\}$, such that $x'_v = 1$ for every $v \in e_i \setminus \{ u \}$. 
Note that all the edges $e$ that are loops $\{ u \}$ satisfy trivially the condition $x'_v = 1$ for every $v \in e \setminus \{ u \}$, as $e \setminus \{ u \} = \emptyset$.
In particular, $e_0$ always satisfies this condition, hence $\mu$ is well defined. 
We then set
\begin{align*}
\varphi(x') := 
\begin{cases}
1 & \text{if $\mu \in \pos \cup \posneg$} \\
0 & \text{if $\mu \in \negat \cup \negpos$}.
\end{cases}
\end{align*}

%, rather than $
\smallskip

In our algorithm we decide to keep loops and parallel edges for ease of exposition.
An additional reason is that we avoid checking for loops and parallel edges at every iteration.
Furthermore, in this way there is a bijection between $\{ e \in E \mid e \neq \{ u \} \}$ and $E'$, which will be useful in the arguments below.
In order to construct the new profit vector $p'$, it is convenient to give a name to the index of the first edge in $e_0 \subseteq e_1 \subseteq \dots \subseteq e_k$ that is not equal to $\{ u \}$.
We denote this index by $\lambda$.
We remark that when $u$ is not contained only in loops, the index $\lambda$ is well defined.
Next, observe that $p' \in \mathbb{R}^{V' \cup E'}$.
We will use an abuse of notation for the indices of $p'$ corresponding to the edges in $E'$ obtained from $e_{\lambda}$, $\dots$, $e_k$ by removing $u$.
We denote these indices by $e_{\lambda}$, $\dots$, $e_k$, even if these edges belong to $E$.
This abuse of notation does not introduce ambiguity because of the bijection between $\{ e \in E \mid e \neq \{ u \} \}$ and $E'$ and the fact that $\{ e_i \in E \mid i = \lambda, \dots, k \} \subseteq \{ e \in E \mid e \neq \{ u \} \}$. 
We are now ready to present our algorithm for $\beta$-acyclic hypergraphs, which we denote by \texttt{Acyclic($G,p$)}.

\bigskip

\begin{breakablealgorithm}
%\begin{algorithm}
	\caption{\texttt{Acyclic($G,p$)} }
	\label{algbeta}
 	\begin{algorithmic}[1]
%	\If {$\abs{V} = 1$}
%		\State Denote by $u$ the only node in $V$.
%			\If {$p_u + \sum_{e \in E} p_e \geq 0$}
%				set $x_u := 1$.
%			\Else \
%				set $x_u := 0$.
%			\EndIf
%		\State \Return $x$.
%	\Else
	\State Find a nest point $u$. Let $e_1 \subseteq e_2 \subseteq \dots \subseteq e_k$ be the edges containing it.
	\State Compute $\pos$, $\negat$, $\negpos$, $\posneg$.
	\State Construct the hypergraph $G' = (V',E')$ by removing $u$ from $G$. %\ul{, and find $\lambda$.}
	
	\If{$e_k = \{ u\}$}
		\State Set $x^*_u := 
				\begin{cases}
				1 & \text{if $\sum_{i=0}^k p_{e_i} \geq 0$} \\
				0 & \text{if $\sum_{i=0}^k p_{e_i} < 0$} \ . % and $p_u + p_{\{u,v\}} \geq 0$}
				\end{cases}$
%		\If {$\sum_{i=0}^k p_{e_i} \geq 0$} set $x^*_u := 1$.
%		\Else {} set $x^*_u := 0$.
%		\EndIf
		
		\If{$\abs{V} > 1$}		
			\State Set 
			%$p' \in \Z^{V' \cup E'}$ be defined as follows: 
			$p'_t := p_t$ for all $t \in V' \cup E'$.
			\State Set $x' := $ \texttt{Acyclic($G',p'$)}.
			\State Set $x^*_w := x'_w$ for all $w \in V'$.
		\EndIf

	\Else
		\State Find $\lambda$.
%		\marginnote{\textcolor{magenta}{prima diciamo che $\lambda$ \`e well defined quando $e_k \neq \{ u \}$}}[0cm]
		\State Set %$p' \in \Z^{V' \cup E'}$ be defined as follows: 
		$p'_t := p_t$ for all $t \in V' \cup E' \setminus \{e_1,\dots,e_k\}$.
%	\If {$p_u \leq 0$} set
		\State Set $p'_{e_i} := 
				\begin{cases}
				0 & \text{for every $i \in \negpos \cap \{ i \mid i \geq \lambda \}$} \\
				\sum_{r=0}^{i} p_{e_r} & \text{for every $i \in \pos \cap \{ i \mid i \geq \lambda \}$}  \\ 
				p_{e_i} & \text{for every $i \in \posneg \cap \{ i \mid i \geq \lambda \}$}  \\ 
				- \sum_{r=0}^{i-1} p_{e_r} & \text{for every $i \in \negat \cap \{ i \mid i \geq \lambda \}$}  \ .
				\end{cases}$

		\State Set $x' := $ \texttt{Acyclic($G',p'$)}.
		\State Set $x^*_w := x'_w$ for all $w \in V'$.
%	\If {$p_u \leq 0$}
		\State Set $x^*_u := \varphi(x')$.
%	\Else
%		\State Set $x_u := 1 - \sum_{j=1}^f (-1)^{j+1} \prod_{v \in e_{i_j} \setminus \{ u \}} x'_v$.
%	\EndIf
	\EndIf
	\State \Return $x^*$.
%	\EndIf
	\end{algorithmic}
%\end{algorithm}
\end{breakablealgorithm}

\bigskip

The remainder of the section is organized as follows: In Section~\ref{sec: example}, we present an example of the execution of the algorithm; In Section~\ref{sec: correctness}, we show the correctness of the algorithm; In Section~\ref{sec: runtime}, we provide the analysis of the running time.
%An example of the execution of the algorithm is presented in Section~\ref{sec: example}.
%We next show the correctness of the algorithm.
%The analysis of the running time can be found in Appendix~\ref{app: runtime}. %}.

\subsection{Example of the execution of the algorithm}
\label{sec: example}

%\nota{In esempio, attenzione a $\mu$. E' definito solo per problema in $R^{n-1}$}

In this section, we show how \texttt{Acyclic($G,p$)} works by running it on an example.
We choose a $\beta$-acyclic hypergraph $G$ that is not kite-free (see \cite{dPKha21}), since no other polynomial-time algorithm is known for instances of this type.
%Therefore, no polynomial time algorithm for this example, 
We use the notation defined so far in this section. % in Section~\ref{sec: hypergraph}.
The input hypergraph $G$, together with the hypergraphs produced by the algorithm throughout its execution, is represented in the Figure~\ref{fig}.
The profits of the edges are written next to the label of the corresponding edge.
Labels are always outside their edges.
Moreover, we denote by $\lambda^{(i)}$, $\mu^{(i)}$, $\varphi^{(i)}$ the values of $\lambda$, $\mu(x')$, $\varphi(x')$ in the $i$-th step of the algorithm.
Similarly we call $G^{(i)}$ the hypergraph that is built at the end of the $i$-th step.
For ease of exposition, we will keep the same names for the edges throughout the execution of the algorithm.
For example, in $G^{(3)}$ we do not change the names of $e_3$ and $e_4$ to $e_1$ and $e_2$ respectively.

\begin{figure}[!htbp]%[H]
\label{fig}
\centering
\includegraphics[width=\textwidth]{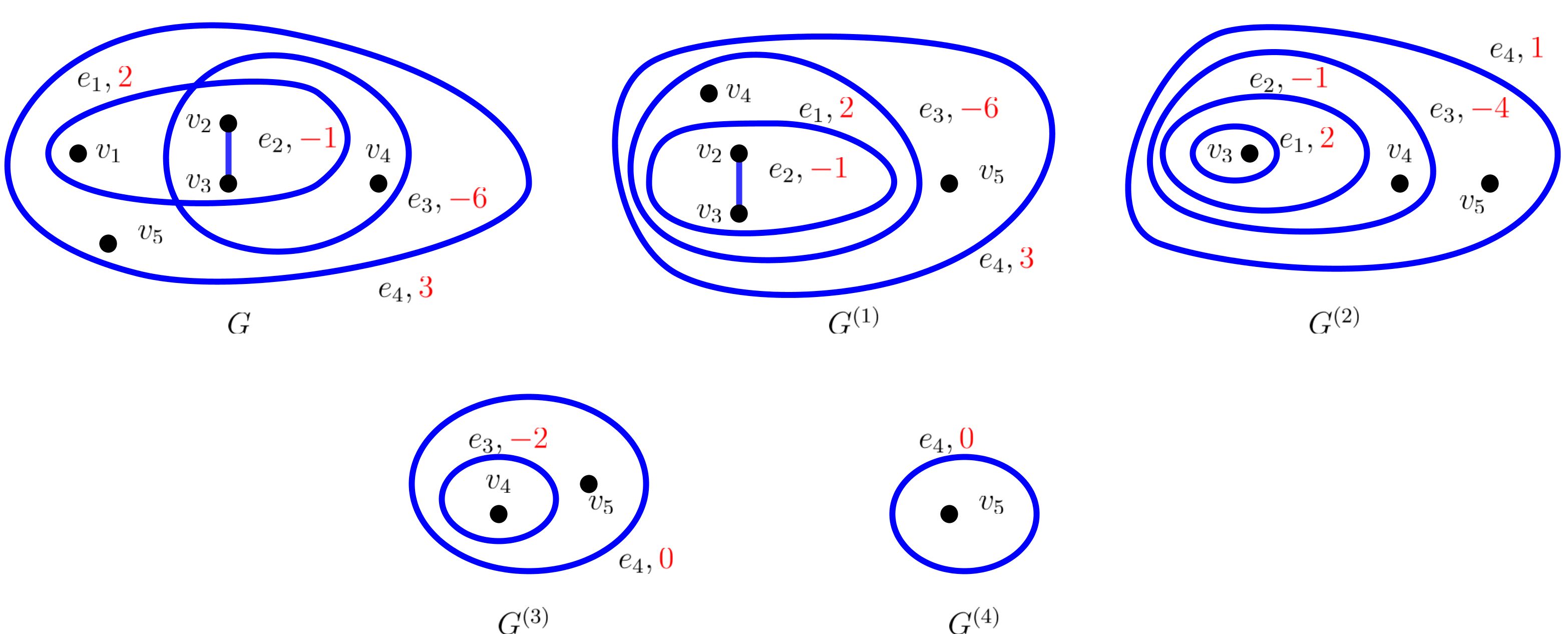}
\caption{The input hypergraph, and the hypergraphs produced at each iteration by \texttt{Acyclic($G,p$)}.}\label{fig}
\end{figure}

We define the profit vector as follows: $p_{v_1} = 1$, $p_{v_2} = 3$, $p_{v_3} = 2$, $p_{v_4} = -1$, $p_{v_5} = 1$, $p_{e_1} = 2$, $p_{e_2} = -1$, $p_{e_3} = -6$, $p_{e_4} = 3$.
%The nodes are already ordered according to their removal order.

\smallskip
\noindent \textbf{Iteration 1:}
Observe that $v_1$ is a nest point of $G$, as it belongs only to $e_1$, $e_4$, and these edges are such that $e_1 \subseteq e_4$.
Moreover, $\lambda^{(1)} = 1$. %$\lambda = 1$.
%First of all, set $p_{e_0} := p_{v_1} = 1$ and $p_{v_1} := 0$.
First of all we need to compute the sets $\pos$, $\negat$, $\negpos$, $\posneg$.
In order to do so, observe that $p_{v_1}$ is non-negative, as well as $p_{v_1} + p_{e_1}$ and $p_{v_1} + p_{e_1} + p_{e_4}$.
This means that $\posneg = \{0,1,4\}$. %, and we set $i_1 := k+1 = 3$.
The node $v_1$ is removed from $G$, and the result is the hypergraph $G^{(1)}$ showed in Figure \ref{fig}.
We denote the profits corresponding to $G^{(1)}$ by $p^{(1)}$.
Therefore, at this step $p^{(1)}_v := p_v$ for every $v \in \{v_2, v_3, v_4, v_5\}$ and $p^{(1)}_e := p_e$ for every $e \in \{e_1, e_2, e_3, e_4\}$.

\smallskip
\noindent \textbf{Iteration 2:}
Here \texttt{Acyclic($G^{(1)},p^{(1)}$)} removes $v_2$.
We remark that $v_2$ is now a nest point for $G^{(1)}$, even if not for $G$.
In fact, in $G^{(1)}$ we have $e_1 \subseteq e_2 \subseteq e_3 \subseteq e_4$.
Furthermore, $\lambda^{(2)} = 1$. %$\lambda = 1$.
Here, $p^{(1)}_{v_2}$, $p^{(1)}_{v_2} + p^{(1)}_{e_1}$, $p^{(1)}_{v_2} + p^{(1)}_{e_1} + p^{(1)}_{e_2}$,  are non-negative, while $p^{(1)}_{v_2} + p^{(1)}_{e_1} + p^{(1)}_{e_2} + p^{(1)}_{e_3}$ is negative, and $p^{(1)}_{v_2} + p^{(1)}_{e_1} + p^{(1)}_{e_2} + p^{(1)}_{e_3} + p^{(1)}_{e_4}$ is positive.
Thus, $\posneg = \{0,1,2\}$, $\negat = \{3\}$, $\pos = \{4\}$, $\posneg = \emptyset$.
Next, we construct $G^{(2)}$.
We define the profits $p^{(2)}$ as follows: $p^{(2)}_{e_1} := p^{(1)}_{e_1} = 2$, and $p^{(2)}_{e_2} := p^{(1)}_{e_2} = -1$, however we define $p^{(2)}_{e_3} := - p^{(1)}_{v_2} - p^{(1)}_{e_1} - p^{(1)}_{e_2} = -4$ and $p^{(2)}_{e_4} := p^{(1)}_{v_2} + p^{(1)}_{e_1} + p^{(1)}_{e_2} + p^{(1)}_{e_3} + p^{(1)}_{e_4} = 1$.
Moreover, $p^{(2)}_{v} := p^{(1)}_{v}$ for every node $v$ of $G^{(2)}$, this means that $p^{(2)}_{v_3} := 2$, $p^{(2)}_{v_4} := -1$, $p^{(2)}_{v_5} := 1$.

\smallskip
\noindent \textbf{Iteration 3:}
Now it's the turn of $v_3$, which is a nest point of $G^{(2)}$.
Here $\lambda^{(3)} = 3$. %$\lambda = 3$.
It is easy to check that all sums $p^{(2)}_{v_3}$, $p^{(2)}_{v_3} + p^{(2)}_{e_1}$, $p^{(2)}_{v_3} + p^{(2)}_{e_1} + p^{(2)}_{e_2}$ are positive, whereas $p^{(2)}_{v_3} + p^{(2)}_{e_1} + p^{(2)}_{e_2} + p^{(2)}_{e_3}$ is negative, and $p^{(2)}_{v_3} + p^{(2)}_{e_1} + p^{(2)}_{e_2} + p^{(2)}_{e_3} + p^{(2)}_{e_4}$ is equal to zero.
Therefore $\posneg = \{0, 1,2\}$, $\negat = \{3\}$, $\negpos = \{4\}$, $\pos = \emptyset$.
The hypergraph $G^{(3)}$ is constructed by removing $v_3$ from $G^{(2)}$.
Observe that, as we remove $v_3$, we are also removing $e_1$ and $e_2$, since $e_1 = e_2 = \{ v_3 \}$.
Then, we set $p^{(3)}_{e_3} := -p^{(2)}_u - p^{(2)}_{e_1} - p^{(2)}_{e_2} = -3$, $p^{(3)}_{e_4} := 0$.
Finally, $p^{(3)}_{v_4} := -1$, $p^{(3)}_{v_5} := 1$.
We then iterate on the smaller hypergraph.

\smallskip
\noindent \textbf{Iteration 4:}
Next, $v_4$ is a nest point of $G^{(3)}$.
Observe that now we have that $\lambda^{(4)} = 4$. %$\lambda = 4$.
Here, $p^{(3)}_{v_4}$, $p^{(3)}_{v_4} + p^{(3)}_{e_3}$, $p^{(3)}_{v_4} + p^{(3)}_{e_3} + p^{(3)}_{e_4}$ are all negative.
Hence, $\negpos = \{0,3,4\}$. 
We construct $G^{(4)}$.
It is easy to see that $p^{(4)}_{e_4}$ is set equal to zero.
Therefore, we define %$p^{(4)}_{e_4} := 0$, 
$p^{(4)}_{v_5} := p^{(3)}_{v_5} = 1$.

\smallskip
\noindent \textbf{Iteration 5:}
We have arrived at the last step of the algorithm.
Indeed, $v_5$ is the only node in $G^{(4)}$.
%\alt{
Observe that it is useless to compute $\pos$, $\negat$, $\negpos$, $\posneg$, and $G^{(5)}$ in this last iteration.
So, we skip it.
%}
We introduce $e_0 = \{v_5\}$ and let $p^{(4)}_{v_5} := p^{(4)}_{v_5}$, $p^{(4)}_{v_5} := 0$.
We check that $p^{(4)}_{v_5} + p^{(4)}_{e_4} = p^{(4)}_{v_5} > 0$.
So, we set $x_{v_5} := 1$.

%\alt{
\smallskip
At this point, we are ready to compute $x_{v_1}$, $x_{v_2}$, $x_{v_3}$, and $x_{v_4}$.
We start from computing $x_{v_4}$.
Since $x_{v_5} = 1$ and $e_4$ is the only edge in $G^{(4)}$, it follows that $\mu^{(4)} = 4$. %$\mu = 4$. %is the last edge such that $x_v = 1$ for all $v \in e_i \setminus \{v_4\}$.
Recall that $4 \in \negpos$ in iteration number 4.
Then, by the definition of $\varphi^{(4)}$, %$\varphi$
we set $x_{v_4} := 0$.
Now we look at $x_{v_3}$.
In this case $\mu^{(3)} = 2$. %$\mu = 2$.
This follows from the facts that we have just set $x_{v_4} = 0$ and that $v_3$ belongs to all the edges of $G^{(2)}$.
Since $2 \in \posneg$ in iteration number 3, we set $x_{v_3} := 1$.
Next, consider $x_{v_2}$.
Similarly to before, $\mu^{(2)} = 2$, %$\mu =2$, 
since $x_{v_4} = 0$.
Again, we have that $2 \in \posneg$ in iteration number 2.
Hence, we define $x_{v_2} := 1$.
It remains to compute $x_{v_1}$.
In order to compute $\mu^{(1)}$, we need to consider the edges containing $v_1$ in $G$, which are $e_1$ and $e_4$.
Since $x_{v_4} = 0$, we 
%We 
find that $\mu^{(1)} = 1$. %$\mu = 1$.
Therefore we set $x_{v_1} := 1$, since $1 \in \posneg$ in the first iteration. 
%}
%We set them to be equal to $1$, since $x_{v_4} = 0$.
Then, an optimal solution of the problem is $x = (x_{v_1}, x_{v_2}, x_{v_3}, x_{v_4}, x_{v_5}) = (1, 1, 1, 0, 1)$.

%\nota{Che ne dici della notazione $\mu^{(i)}(x_i, \dots, x_n)$ in questo caso?
%Ho paura che a definire ogni volta $x^{(i)'}$ ed usare la notazione $\mu^{(i)}(x^{(i)'})$ diventi un po' confusionario.
%Ho cambiato anche $\varphi$ di conseguenza.}

\subsection{Correctness of the algorithm}
\label{sec: correctness}

In this section, we show that \texttt{Acyclic($G,p$)} is correct.

\begin{proposition} \label{prop correct}
The algorithm \texttt{Acyclic($G,p$)} returns an optimal solution to \eqref{pb}, provided that $G$ is $\beta$-acyclic.
\end{proposition}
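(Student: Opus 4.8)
\textbf{Proof plan for Proposition~\ref{prop correct}.}

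The plan is to argue by induction on $|V|$. The base case $|V| = 1$ is trivial: the hypergraph has a single node $u$, possibly with several loops, and the algorithm takes the \texttt{if} branch with $e_k = \{u\}$, setting $x_u^* = 1$ iff $\sum_{i=0}^k p_{e_i} \ge 0$, which is exactly the optimal choice since the objective is $(\sum_{i=0}^k p_{e_i}) x_u$. For the inductive step, fix a nest point $u$ contained in edges $e_1 \subseteq \dots \subseteq e_k$ (existence guaranteed since $G$ is $\beta$-acyclic, by Theorem~\ref{charact}), let $G' = (V',E')$ be the hypergraph obtained by removing $u$, and note $G'$ is again $\beta$-acyclic since removing a nest point preserves $\beta$-acyclicity (again by Theorem~\ref{charact}). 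By the inductive hypothesis, the recursive call \texttt{Acyclic($G',p'$)} returns an optimal solution $x'$ to \eqref{pb} on $(G',p')$. The crux is then to show that $x^*$, obtained by extending $x'$ with $x_u^* = \varphi(x')$ (or the appropriate rule in the all-loops branch), is optimal for $(G,p)$.

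The key reduction step is the following identity. For any $x \in \{0,1\}^V$, write $z = \drop{u}{x} \in \{0,1\}^{V'}$ and let $\mu = \mu(z)$ be the largest index with $z_v = 1$ for all $v \in e_\mu \setminus \{u\}$. Conditioning on $z$, the terms of the objective of \eqref{pb} on $(G,p)$ that involve $x_u$ are exactly $\big(\sum_{i=0}^{\mu} p_{e_i}\big) x_u$, because an edge $e_i$ containing $u$ contributes $p_{e_i} x_u$ iff all its other nodes are set to $1$, which by the laminar chain structure happens iff $i \le \mu$. Hence, for fixed $z$, the optimal choice of $x_u$ is $1$ if $\sum_{i=0}^\mu p_{e_i} \ge 0$ and $0$ otherwise, and the resulting contribution is $\max\{0, \sum_{i=0}^\mu p_{e_i}\}$. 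The first task is therefore to verify that the rule $\varphi$ — which sets $x_u = 1$ precisely when $\mu \in \pos \cup \posneg$ — agrees with ``$\sum_{i=0}^\mu p_{e_i} \ge 0$''. This is a matter of unwinding the definitions of $\pos, \negat, \negpos, \posneg$: by Remark~\ref{ordine} the indices cycle through $\negpos, \pos, \posneg, \negat$, and one checks that $i \in \negpos \cup \negat$ exactly when the partial sum $\sum_{j=0}^{i} p_{e_j}$ is negative (or is zero with the last nonzero partial sum negative), handling the zero-partial-sum ties consistently with the ``$\ge 0$'' convention. The second, and I expect main, task is to show that the new profit vector $p'$ exactly encodes the max: that is, for every $z \in \{0,1\}^{V'}$,
\begin{equation*}
\sum_{v \in V'} p'_v z_v + \sum_{e' \in E'} p'_{e'} \prod_{v \in e'} z_v \;=\; \max_{x_u \in \{0,1\}} \left( \text{objective of \eqref{pb} on } (G,p) \text{ at } (z, x_u) \right).
\end{equation*}
Since $p'$ agrees with $p$ outside $e_\lambda, \dots, e_k$ (and the loops $e_0, \dots, e_{\lambda-1}$ vanish upon removal), it suffices to check that $\sum_{i : i \ge \lambda,\ e_i \ni \text{all of } z} p'_{e_i}$ telescopes to $\max\{0, \sum_{i=0}^{\mu} p_{e_i}\}$ for every possible value of $\mu$. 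Fixing $\mu \ge \lambda$ (the case $\mu < \lambda$ giving contribution $0$, consistent with all $p'_{e_i} = 0$ there), this is the sum $\sum_{i = \lambda}^{\mu} p'_{e_i}$, and one substitutes the four-way definition of $p'_{e_i}$ and uses Remark~\ref{ordine} to see the sum telescopes: within a maximal run of indices in $\posneg$ following a positive flip at index $s \in \pos$, the partial contributions $p'_{e_s} + p_{e_{s+1}} + \dots = \sum_{r=0}^{s} p_{e_r} + p_{e_{s+1}} + \dots = \sum_{r=0}^{i} p_{e_r}$ correctly track the running partial sum, which is positive there; across a negative flip at $t \in \negat$ the term $-\sum_{r=0}^{t-1} p_{e_r}$ cancels the accumulated positive sum back to $0$; and the zero blocks in $\negpos$ keep it at $0$. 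A short case analysis on which of the four sets $\mu$ lies in, together with the cyclic order, closes this.

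Finally, one assembles these pieces: by the above identity,
\begin{equation*}
\max_{x \in \{0,1\}^V} \obj_{(G,p)}(x) \;=\; \max_{z \in \{0,1\}^{V'}} \obj_{(G',p')}(z) \;=\; \obj_{(G',p')}(x'),
\end{equation*}
the last equality by the inductive hypothesis; and since $x^* = (x', \varphi(x'))$ attains $\obj_{(G,p)}(x^*) = \obj_{(G',p')}(x')$ by the pointwise optimality of $\varphi$ established above, $x^*$ is optimal for $(G,p)$. The all-loops branch ($e_k = \{u\}$) is the degenerate instance of the same argument, with $\mu \equiv 0$, $\sum_{i=0}^\mu p_{e_i}$ replaced by $\sum_{i=0}^k p_{e_i}$, and $p'$ simply inherited unchanged from $p$ on $V' \cup E'$. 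The main obstacle is the bookkeeping in the telescoping-sum verification — keeping the zero-partial-sum tie-breaking rules in $\varphi$, the ``$\ge 0$'' convention in line~5, and the four-way formula for $p'$ all mutually consistent — rather than any conceptual difficulty; the structural input (laminarity of the edges at a nest point, and $\beta$-acyclicity being preserved under nest-point removal) does all the heavy lifting.
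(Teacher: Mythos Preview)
Your plan is correct and follows the same inductive strategy as the paper: both establish that $\varphi$ selects an optimal $x_u$ given the remaining variables, and then verify via a telescoping computation over the sign-partition $\pos,\negat,\posneg,\negpos$ that the modified profits $p'$ encode this optimum (the paper carries this out as Claim~\ref{claim: sum B} and equation~\eqref{eq:claim}).

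There is, however, one concrete misstatement. Your key identity $\obj_{(G',p')}(z) = \max_{x_u} \obj_{(G,p)}(z,x_u)$, equivalently $\sum_{i=\lambda}^{\mu} p'_{e_i} = \max\bigl\{0,\sum_{i=0}^{\mu} p_{e_i}\bigr\}$, is false in general. The two sides differ by the additive constant $C = \max\bigl\{0,\sum_{i=0}^{\lambda-1} p_{e_i}\bigr\}$, namely the accumulated profit of $p_u$ together with the loop edges $e_1,\dots,e_{\lambda-1}$; these edges simply vanish when $u$ is removed and their profit is not absorbed into any $p'_{e_i}$. Concretely, take $\mu = \lambda - 1$ with $\sum_{i=0}^{\lambda-1} p_{e_i} > 0$: your left-hand side is an empty sum (zero) while your right-hand side is positive, so the parenthetical ``the case $\mu < \lambda$ giving contribution $0$'' is wrong. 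The paper's equation~\eqref{eq:claim} carries exactly this constant, splitting on whether $\lambda \in \negpos \cup \pos$ (then $C = 0$) or $\lambda \in \posneg \cup \negat$ (then $C = \sum_{i=0}^{\lambda-1} p_{e_i}$). Since $C$ is independent of $z$, it does not affect the argmax, and your final assembly goes through unchanged once you insert it; so this is a bookkeeping slip rather than a structural gap.
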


\begin{proof}
We prove this proposition by induction on the number of nodes.
We start from the base case, that is when $\abs{V} = 1$. %\{u\}$.
It follows that $e = \{u\}$ for all $e \in E$, since $\{ e_1, \dots, e_k\} = E$.
In this case the algorithm only performs lines 1-5 and line 17. 
There are only two possible solutions: either $x^*_u = 0$, or $x^*_u = 1$.
The algorithm computes the objective corresponding to $x^*_u = 1$.
If the objective is non-negative, it sets $x^*_u := 1$, otherwise it sets $x^*_u := 0$.
The solution provided by the algorithm is optimal, since we are maximizing.

Next we consider the inductive step, and analyze the correctness of \texttt{Acyclic($G,p$)} when it removes a nest point.
We define $\obj(\cdot)$ to be the objective value of \eqref{pb} yielded by a binary vector in $\{0,1\}^{V}$.
Let $u$ be the nest point to be removed at a given iteration of the algorithm.
We denote by \bpogp\ the problem of the form \eqref{pb} over the hypergraph $G'$ and the profits $p'$, defined by %Algorithm 
\texttt{Acyclic($G,p$)}.
Likewise, let $\obj'(\cdot)$ be the objective value of \bpogp\ provided by a vector in $\{0,1\}^{V'}$. 
By the inductive hypothesis, the vector $x'$ defined in line 8 or %\ul{13} \alt{
14 %} 
is optimal to \bpogp.
Our goal is to show that the returned solution $x^*$ is optimal to \eqref{pb}.

\smallskip

We consider first the case in which $e_k = \{u\}$, \ie when all the edges that contain $u$ are loops.
This implies that every edge $e \in E$ is either a loop $\{ u \}$ or does not contain the node $u$.
Therefore, an optimal solution to \eqref{pb} is obtained by combining an optimal solution to \bpogp {} with an optimal solution to the problem represented by the hypergraph $(\{ u \}, \{ e_1, \dots, e_k \})$ with profits $p_u$ and $p_{e_i}$, for $i = 1,\dots,k$.
By using the same proof of the base case, we can see that line 5 provide the optimal value of $x^*_u$.
Since the vector $x'$ is optimal to \bpogp, we can conclude that the vector $x^*$ returned by the algorithm is optimal.

\smallskip

Next, we consider the case in which $e_k$ is not a loop.
For notational simplicity, we introduce extensions of the functions $\mu$ and $\varphi$ with domain $\{0,1\}^V$ rather than $\{0,1\}^{V'}$.
To do so, given a vector $x \in \{0,1\}^V$, we denote by $\drop{u}{x}$ the vector in $\{0,1\}^{V'}$ obtained from $x$ by dropping its entry corresponding to the node~$u$.
We then define $\mu(x) := \mu(\drop{u}{x})$ and $\varphi(x) := \varphi(\drop{u}{x})$.

%\nota{Questo era prima!! Attenzione. Silvia: ho controllato e $\drop{\cdot}{\cdot}$ compare solo nella dimostrazione di Proposition~\ref{prop correct}. \\}
%\nota{Per\`o ora che $\mu$ e $\varphi$ sono definiti su $\{0,1\}^{V - 1}$ bisognerebbe essere un po' pi\`u precisi? Tipo: ``For ease of notation, we extend the definitions of the functions $\mu$ and $\varphi$ to $\{0,1\}^V$ by setting $\mu(x) := \mu(\drop{u}{x})$ and $\varphi(x) := \varphi(\drop{u}{x})$ respectively, where $x \in \{0,1\}^V$.'' }

\begin{claim}\label{claim: opt}
There exists an optimal solution $\tilde x$ to \eqref{pb} such that $\tilde x_u~=~\varphi(\tilde x)$.
\end{claim}

\begin{cpf}[Proof of Claim~\ref{claim: opt}]
To show this, let $\bar x$ be an optimal solution to \eqref{pb}.
If $\bar x_u = \varphi(\bar x)$, then we are done.
Thus, assume that $\bar x_u = 1 - \varphi(\bar x)$, and let $\tilde x$ be obtained from $\bar x$ by setting $\tilde x_u := \varphi(\bar x)$. 
Note however that $\varphi(\bar x) = \varphi(\tilde x)$, since $\bar x_v = \tilde x_v$ for all nodes $v \neq u$.
Therefore we want to show that $\tilde x$ is optimal.
The proof splits in two cases: either $\varphi(\tilde x) = 0$, or $\varphi(\tilde x) = 1$.

Consider the first case $\varphi(\tilde x) = 0$. 
Hence $\bar x_u = 1$ and $\tilde x_u = 0$.
Therefore, it follows that $\obj(\bar x) = \obj(\tilde x) + \sum_{i=0}^{\mu} p_{e_i}$. 
By definition of $\varphi$, we have $\mu = \mu(\tilde x) \in \negat \cup \negpos$, thus $\sum_{i=0}^{\mu} p_{e_i} \le 0$.
Then, we obtain that $\obj(\bar x) \le \obj(\tilde x)$ and $\tilde x$ is optimal to \eqref{pb} as well.

Assume now that we are in the second subcase, i.e., $\varphi(\tilde x) = 1$.
Therefore we have $\bar x_u = 0$, $\tilde x_u = 1$, and $\obj(\tilde x) = \obj(\bar x) +  \sum_{i=0}^{\mu} p_{e_i}$. 
Since $\mu \in \pos \cup \posneg$, it follows that $\sum_{i=0}^{\mu} p_{e_i} \ge 0$, therefore $\obj(\tilde x) \ge \obj(\bar x)$.
Thus, we can conclude that also $\tilde x$ is optimal to~\eqref{pb}.
\end{cpf}

We remark that, since $e_k \neq \{ u \}$, the index $\lambda$ is well defined and $\lambda \geq 1$.
From now on let $x$ be any vector $\{0,1\}^V$ such that $x_u = \varphi(x)$.
Let $\mu = \mu(x)$.

Our next main goal is to show the equality
\begin{equation}\label{eq:claim}
\obj(x) = 
\begin{cases}
\obj'(\drop{u}{x}), & \text{if $\lambda \in \negpos \cup \pos$} \\
\obj'(\drop{u}{x}) + \sum_{i=0}^{\lambda - 1} p_{e_i}, & \text{if $\lambda \in \posneg \cup \negat$} . 
\end{cases}
\end{equation}
We define the sets $A$ and $B$ as follows.
If $x_u = 0$ let $A := \emptyset$. 
Otherwise, that is if $x_u = 1$, we define $A := \{ 0, 1, \dots, \mu \}$.
In order to define $B$ we observe that either $\lambda \leq \mu$ or $\mu = \lambda -1$.
This is because $\lambda - 1$ is the index of the last loop $\{u\}$.
We then define $B := \{ \lambda, \dots, \mu \}$ if $\lambda \leq \mu$, otherwise we set $B := \emptyset$, if $\mu = \lambda -1$.
In order to prove \eqref{eq:claim}, it suffices to check that 
\begin{equation}\label{subclaim}
\sum_{i \in A} p_{e_i} = 
\begin{cases}
\sum_{i \in B} p'_{e_i} , & \text{if $\lambda \in \negpos \cup \pos$} \\
\sum_{i \in B} p'_{e_i} + \sum_{i=0}^{\lambda - 1} p_{e_i}, & \text{if $\lambda \in \posneg \cup \negat$} ,
\end{cases}
\end{equation}
by the definitions of $p'$ and $\drop{u}{x}$.
In the next claim, we study the value of $\sum_{i \in B} p'_{e_i}$, which is present in \eqref{subclaim}.

\begin{claim}\label{claim: sum B}
Let $\lambda \leq \mu$.
If $\pos \cap \{ \lambda, \dots, \mu \} = \emptyset$, then
\begin{equation}\label{subclaimB1}
\sum_{i \in B} p'_{e_i} = 
\begin{cases}
0, & \text{if $\lambda \in \negpos$} \\
\sum_{i=\lambda}^{\mu} p_{e_i} , & \text{if $\lambda \in \posneg$ and $\mu \in \posneg$} \\
- \sum_{i=0}^{\lambda - 1} p_{e_i}, & \text{if $\lambda \in \negat$ or if $\lambda \in \posneg$ and $\mu \in \negat \cup \negpos$} .
\end{cases}
\end{equation}
If $\pos \cap \{ \lambda, \dots, \mu \} \neq \emptyset$, then
\begin{equation}\label{subclaimB2}
\sum_{i \in B} p'_{e_i} = 
\begin{cases}
0, & \text{if $\lambda \in \negpos \cup \pos$ and $\mu \in \negat \cup \negpos$} \\
\sum_{i=0}^{\mu} p_{e_i} , & \text{if $\lambda \in \negpos \cup \pos$ and $\mu \in \pos \cup \posneg$} \\
- \sum_{i=0}^{\lambda - 1} p_{e_i}, & \text{if $\lambda \in \posneg \cup \negat$ and $\mu \in \negat \cup \negpos$} \\
\sum_{i=\lambda}^{\mu} p_{e_i} , & \text{if $\lambda \in \posneg \cup \negat$ and $\mu \in \pos \cup \posneg$} .
\end{cases}
\end{equation}
\end{claim}

\begin{cpf}[Proof of Claim~\ref{claim: sum B}]
Observe that $\sum_{i \in B} p'_{e_i}$ is not trivially equal to zero, since $\lambda \leq \mu$.

\smallskip

First, we assume that $\pos \cap \{ \lambda, \dots, \mu \} = \emptyset$.
In this case we can easily compute the value of $\sum_{i \in B} p'_{e_i}$.
Assume first that $\lambda \in \negpos$.
By Remark~\ref{ordine} it is easy to see that $\{\lambda, \dots, \mu \}$ must belong to $\negpos$.
Then, by definition of $p'$, it follows that $\sum_{i \in B} p'_{e_i} = 0$.
Next, consider the case in which $\lambda \in \posneg$ and $\mu \in \posneg$.
From Remark~\ref{ordine}, we can conclude that $\{\lambda, \dots, \mu\} \subseteq \posneg$.
By definition of $p'$, we can observe that $\sum_{i \in B} p'_{e_i} = \sum_{i=\lambda}^{\mu} p_{e_i}$.
Next, assume that $\lambda \in \negat$. 
Since $\lambda \in \negat$, it is easy to see that $\{\lambda + 1, \dots, \mu\} \subseteq \negpos$ by Remark~\ref{ordine}.
Hence by definition of $p'$, we get that $\sum_{i \in B} p'_{e_i} = p'_{e_{\lambda}} = - \sum_{i=0}^{\lambda - 1} p_{e_i}$.
Lastly, let $\lambda \in \posneg$ and $\mu \in \negat \cup \negpos$.
This implies that there must be exactly one index $q \in \negat \cap \{\lambda + 1, \dots, \mu\}$. 
By using the definition of $p'$ we obtain $\sum_{i \in B} p'_{e_i} = 
\sum_{i = \lambda}^{q -1} p'_{e_i} + p'_q + \sum_{i = q+1}^{\mu} p'_{e_i} = 
\sum_{i = \lambda}^{q -1} p_{e_i} - \sum_{i = 0}^{q-1} p_{e_i} = - \sum_{i=0}^{\lambda - 1} p_{e_i}$.
This ends the proof of~\eqref{subclaimB1}.

\smallskip

Next, we assume $\pos \cap \{ \lambda, \dots, \mu \} \neq \emptyset$.
We divide $\sum_{i \in B} p'_{e_i}$ in three parts. 
Let $\iota_1$ be the first index in $\pos \cap \{ \lambda, \dots, \mu \}$, and let $\iota_2$ be the last index in $\pos \cap \{ \lambda, \dots, \mu \}$. 
Note that it is possible that $\iota_1 = \iota_2$.
Then, we observe that 
\begin{equation}\label{brokensum}
\sum_{i \in B} p'_{e_i} = \sum_{i = \lambda}^{\iota_1 - 1} p'_{e_i} + \sum_{i = \iota_1}^{\iota_2 - 1} p'_{e_i} + \sum_{i = \iota_2}^{\mu} p'_{e_i} \ .
\end{equation}
Now we study the value of the sums in the right hand side of~\eqref{brokensum}.

We start by showing that
\begin{equation}\label{eq: zero}
\sum_{i = \iota_1}^{\iota_2 - 1} p'_{e_i} = 0 \ .
\end{equation}
If it is vacuous, then it is trivially equal to zero.
Then we assume that it is not vacuous.
Since $\iota_2 \in \pos$, the last index in this sum is in $\negat \cup \negpos$.
Because of the fact that the first index of the sum is in $\pos$ and by definition of $p'$, we can conclude that all the profits in $\sum_{i = \iota_1}^{\iota_2 - 1} p'_{e_i}$ cancel each other out.
Then, \eqref{eq: zero} holds.
From now on, in the analysis of \eqref{brokensum}, we will only focus on the values of $\sum_{i = \lambda}^{\iota_1 - 1} p'_{e_i}$ and $\sum_{i = \iota_2}^{\mu} p'_{e_i}$.

We consider the first of these two sums.
We prove that
\begin{equation}\label{eq: firstsum}
\sum_{i = \lambda}^{\iota_1 - 1} p'_{e_i} = 
\begin{cases}
0, & \text{if $\lambda \in \negpos \cup \pos$} \\
- \sum_{i=0}^{\lambda - 1} p_{e_i}, & \text{if $\lambda \in \posneg \cup \negat$} \ .
\end{cases}
\end{equation}
We start with analyzing the case in which $\lambda \in \negpos \cup \pos$.
If $\lambda \in \pos$, then $\iota_1 = \lambda$ and the sum is trivially equal to $0$.
Then we assume that $\lambda \in \negpos$.
Since $\iota_1$ is the first index in $\pos$ with $\iota_1 \ge \lambda$, Remark~\ref{ordine} implies that
all indices $\{\lambda, \dots, \iota_1 - 1 \}$ belong to $\negpos$.
Therefore, by definition of $p'$,
we conclude that $\sum_{i = \lambda}^{\iota_1 - 1} p'_{e_i} = 0$.
Next, let $\lambda \in \posneg \cup \negat$.
Here, the indices in $\{\lambda, \dots, \iota_1-1\}$ can be in $\posneg$, $\negat$, or $\negpos$.
Moreover, there must be exactly one index $q \in \negat \cap \{\lambda, \dots,\iota_1-1\}$.
Then, we can see that $\sum_{i = \lambda}^{\iota_1 - 1} p'_{e_i} = \sum_{i = \lambda}^{q -1} p'_{e_i} + p'_q + \sum_{i = q+1}^{\iota_1 - 1} p'_{e_i} = \sum_{i = \lambda}^{q -1} p_{e_i} - \sum_{i = 0}^{q-1} p_{e_i} = - \sum_{i=0}^{\lambda - 1} p_{e_i}$.
This concludes the proof of \eqref{eq: firstsum}.

It remains to compute $\sum_{i = \iota_2}^{\mu} p'_{e_i}$.
We show that
\begin{equation}\label{eq: secondsum}
\sum_{i = \iota_2}^{\mu} p'_{e_i} = 
\begin{cases}
0, & \text{if $\mu \in \negat \cup \negpos$} \\
\sum_{i=0}^{\mu} p_{e_i}, & \text{if $\mu \in \pos \cup \posneg$} \ .
\end{cases}
\end{equation}
Assume that $\mu \in \negat \cup \negpos$.
Since $\iota_2 \in \pos$, there must be an index $q \in \negat \cap \{ \iota_2 + 1, \dots, \mu \}$.
Then, $\sum_{i = \iota_2}^{\mu} p'_{e_i} = p'_{\iota_2} + \sum_{i=\iota_2+1}^{q-1} p'_{e_i} + p'_{e_q} + \sum_{i=q+1}^{\mu} p'_{e_i}$.
By using the definition of $p'$ we obtain the following: $\sum_{i = \iota_2}^{\mu} p'_{e_i} = \sum_{i=0}^{\iota_2} p_{e_i} + \sum_{i=\iota_2+1}^{q-1} p_{e_i} - \sum_{i=0}^{q-1} p_{e_i} = 0$.
We look at the second case, and we assume that $\mu \in \pos \cup \posneg$.
Then, by definition of $\iota_2$ and the fact that $\mu \in \pos \cup \posneg$, it follows that all indices in $\{ \iota_2+1, \dots, \mu \}$ belong to $\posneg$.
By the definition of $p'$, we can conclude that $\sum_{i = \iota_2}^{\mu} p'_{e_i} = p'_{\iota_2} + \sum_{i = \iota_2+1}^{\mu} p'_{e_i} = \sum_{i=0}^{\iota_2} p_{e_i} + \sum_{i=\iota_2+1}^{\mu} p_{e_i} = \sum_{i=0}^{\mu} p_{e_i}$.
This concludes the proof of \eqref{eq: secondsum}.

\smallskip

We conclude that \eqref{subclaimB2} holds, by combining appropriately the different cases of \eqref{eq: firstsum} and \eqref{eq: secondsum} into \eqref{brokensum}.
\end{cpf}

\smallskip

We are now ready to prove \eqref{eq:claim}.
This proof is divided in two cases, depending on the value of $x_u$.
The first case that we consider is when $x_u = \varphi(x) = 0$.
Therefore, we assume that $x_u = \varphi(x) = 0$.
As previously observed, we only need to show that \eqref{subclaim} holds.
%Since $x_u = 0$, it follows that $A = \emptyset$,
hence $\sum_{i \in A} p_{e_i} = 0$.
Furthermore, $\varphi(x) = 0$ implies $\mu \in \negat \cup \negpos$.
We consider the two cases $\mu = \lambda - 1$ and $\lambda \le \mu$.
Consider the case $\mu = \lambda - 1$.
Then $\sum_{i \in B} p'_{e_i}$ is vacuous and equal to $0$.
Furthermore, if $\mu \in \negat \cup \negpos$ it means that $\lambda \in \negpos \cup \pos$. 
Hence \eqref{subclaim} holds. 
Next, we assume that $\lambda \le \mu$, which implies that $B$ is non-empty.
If $\pos \cap \{\lambda,\dots,\mu\} = \emptyset$, we get that $\sum_{i \in B} p'_{e_i} = 0$ if $\lambda \in \negpos$ by \eqref{subclaimB1}. 
Therefore \eqref{subclaim} is true.
Otherwise, if $\lambda \in \posneg \cup \negat$, then $\sum_{i \in B} p'_{e_i} = - \sum_{i=0}^{\lambda - 1} p_{e_i}$ since $\mu \in \negat \cup \negpos$. 
Hence \eqref{subclaim} holds also in this case. %, and we have proved \eqref{eq:claim} in this setting.
Therefore, we assume that $\pos \cap \{\lambda,\dots,\mu\} \neq \emptyset$.
We start from the case in which $\lambda \in \negpos \cup \pos$.
From~\eqref{subclaimB2}, we see that $\sum_{i \in B} p'_{e_i} = 0$, as $\mu \in \negat \cup \negpos$.
This concludes the proof of~\eqref{subclaim} when $\lambda \in \negpos \cup \pos$.
So now consider $\lambda \in \posneg \cup \negat$.
From \eqref{subclaimB2} we obtain $\sum_{i \in B} p'_{e_i} = - \sum_{i=0}^{\lambda - 1} p_{e_i}$ in this case.
Hence, we can conclude that \eqref{subclaim}
%and hence \eqref{eq:claim}, 
holds also if %$\mu \in \negat \cup \negpos$ and 
$\lambda \in \posneg \cup \negat$.

The remaining case to consider, in order to prove that \eqref{eq:claim} holds for every $x \in \{0,1\}^V$, is when $x_u = \varphi(x) = 1$.
Similarly to the previous case, we just need to show that \eqref{subclaim} holds.
Assume $x_u = \varphi(x) = 1$.
From $x_u = 1$ we obtain $\sum_{i \in A} p_{e_i} = \sum_{i=0}^{\mu} p_{e_i}$.
Because of $\varphi(x) = 1$, we know that $\mu \in \pos \cup \posneg$.
Once again, we consider the cases $\mu = \lambda - 1$ and $\lambda \le \mu$.
Assume $\mu = \lambda - 1$.
Then, we have that $B = \emptyset$ and $\sum_{i\in B} p'_{e_i}$ is equal $0$.
Moreover, we have $\lambda \in \posneg \cup \negat$, since $\mu \in \pos \cup \posneg$.
Then, it is easy to see that \eqref{subclaim} is true.
Next, we consider the case in which $\lambda \le \mu$.
%Recall that $\mu \in \pos \cup \posneg$.
We start from situation where $\lambda \in \negpos \cup \pos$.
By using Remark~\ref{ordine}, we observe that $\pos \cap \{\lambda,\dots,\mu\} \neq \emptyset$.
Then, we obtain that $\sum_{i \in B} p'_{e_i} = \sum_{i=0}^{\mu} p_{e_i}$ from \eqref{subclaimB2}.
%It remains to look at $\sum_{i = \iota_2}^{\mu} p'_{e_i}$.
%By \eqref{eq: secondsum} it follows that it is equal to $\sum_{i=0}^{\mu} p_{e_i}$.
Hence, \eqref{subclaim} holds. % is true also in this case.
%We remark that this was the last case to analyze in order to show \eqref{subclaim}.
Assume now $\lambda \in \posneg \cup \negat$.
We first observe that it is possible that $\pos \cap \{\lambda,\dots,\mu\} = \emptyset$. %there are no indices corresponding to positive flips in $\{ \lambda, \dots, \mu \}$.
This can happen only if $\lambda, \mu \in \posneg$.
In this case %$\sum_{i = \iota_2}^{\mu} p'_{e_i}$ is trivially equal to $0$ as the su, and 
$\sum_{i \in B} p'_{e_i} = \sum_{i=\lambda}^{\mu} p_{e_i}$ by \eqref{subclaimB1}.
It is easy to see that \eqref{subclaim} holds in this case.
So assume instead that $\pos \cap \{\lambda,\dots,\mu\} \neq \emptyset$. %, and therefore that $\iota_2 \leq \mu$.
Then, $\sum_{i \in B} p'_{e_i} = \sum_{i=\lambda}^{\mu} p_{e_i}$ by \eqref{subclaimB2}.
Therefore, \eqref{subclaim} is true.
This concludes the proof of~\eqref{eq:claim}.

\smallskip 

We are finally ready to show that the solution provided by the algorithm is optimal.
Let $\tilde x$ be an optimal solution to \eqref{pb} such that $\tilde x_u = \varphi(\tilde x)$.
We know that it exists by Claim~\ref{claim: opt}.
We denote by $x^*$ the solution returned by the algorithm, which is defined by
\begin{equation*}
x^*_w := \begin{cases}
x'_w,  & \text{if } w \neq u \\
\varphi(x'), \quad & \text{if } w = u \ .
\end{cases}
\end{equation*}
It is easy to see that $x^*_u = \varphi(x')$.
By the previous argument, it follows that \eqref{eq:claim} holds for both $\tilde x$ and $x^*$. 
Therefore, $\obj(x^*) = \obj'(x')$ and $\obj(\tilde x) = \obj'(\drop{u}{\tilde x})$, %}
%$\obj(\tilde x) = 
if $\lambda \in \negpos \cup \pos$.
Similarly, if $\lambda \in \posneg \cup \negat$, we obtain that $\obj(x^*) = \obj'(x') + \sum_{i=0}^{\lambda - 1} p_{e_i}$ and $\obj(\tilde x) = \obj'(\drop{u}{\tilde x}) + \sum_{i=0}^{\lambda - 1} p_{e_i}$. % from \eqref{eq:claim}.
%\vspace{15em}
%\smallskip
We are now ready to prove that $x^*$ is optimal to \eqref{pb}.
The optimality of $x'$ to \bpogp {} implies that $\obj'(x') \geq \obj'(\drop{u}{\tilde x})$.
This inequality implies $\obj(x^*) \geq \obj(\tilde x)$ in both cases.
Note that if $\lambda \in \posneg \cup \negat$ it suffices to add $\sum_{i=0}^{\lambda - 1} p_{e_i}$ on both sides of the inequality to see this.
Hence, we can conclude that $x^*$ is an optimal solution to \eqref{pb}.
\end{proof}

We remark that our algorithm is correct even if the profits are allowed to be real numbers.
However, for the purposes of the analysis of the algorithm, we chose to consider only the setting in which the profits are all integers.

\subsection{Analysis of the running time}
\label{sec: runtime}

%\section{Analysis of \texttt{Acyclic($G,p$)}'s running time}\label{app: runtime}

In this section, we show that \texttt{Acyclic($G,p$)} runs in strongly polynomial time.
We remark that in this paper we use standard complexity notions in discrete optimization, and we refer the reader to the book \cite{SchBookIP} for a thorough introduction.
Our analysis is admittedly crude and provides a loose upper bound of the running time.
It could be further improved by paying particular attention to the data structure and to the exact number of operations performed in each step.
In our analysis, we choose to store the hypergraph $G=(V,E)$ by its node-edge 
%$\abs{V} \times \abs{E}$ 
incidence matrix.

The running time that we exhibit below is in terms of the time needed to find one nest point in $G$, which is denoted by $\tau$.
As mentioned in \cite{OrdPauSze13}, nest points can be found in polynomial-time by brute force.
Once we find one nest point, we also explicitly know the edges that contain it and their order under set inclusion.
%All these operations are included in the term $\tau$.

\begin{proposition}\label{theoruntime}
The algorithm \texttt{Acyclic($G,p$)} is strongly polynomial, provided that $G=(V,E)$ is a $\beta$-acyclic hypergraph.
In particular, the number of arithmetic operations performed is $O(\abs{V}(\tau + \abs{E} + \abs{V}\log\abs{E}))$.
\end{proposition}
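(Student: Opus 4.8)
The plan is to bound the total work by analyzing one recursive call of \texttt{Acyclic($G,p$)} (not counting the recursive sub-call it spawns) and then multiplying by the recursion depth. The key observation is that each invocation removes exactly one node, and by Theorem~\ref{charact} the recursion bottoms out after exactly $|V|$ invocations; hence it suffices to show each invocation performs $O(\tau + |E| + |V|\log|E|)$ arithmetic operations, where throughout we may charge against the \emph{original} $|V|$ and $|E|$ since the hypergraph only shrinks. First I would account for line~1: finding a nest point $u$ together with the ordered chain $e_1 \subseteq \cdots \subseteq e_k$ of edges containing it costs $\tau$ by the brute-force procedure of \cite{OrdPauSze13}, and since we store $G$ by its node--edge incidence matrix this also hands us the edges and their set-inclusion order. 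Removing $u$ to form $G'$ (line~3) amounts to deleting the column of $u$ from the incidence matrix, which is $O(|E|)$.

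Next I would analyze the cost of computing $\pos$, $\negat$, $\negpos$, $\posneg$ (line~2) and the updated profit vector $p'$ (lines~12--14 or lines~6--7). To determine the sign sequence $p_{e_0}, p_{e_0}+p_{e_1}, \dots, p_{e_0}+\cdots+p_{e_k}$ we compute the $k+1 \le |E|+1$ running partial sums in $O(|E|)$ additions; scanning this sequence once identifies every flip and thus the four index sets, again in $O(|E|)$. For the profit update, each edge $e_i$ with $i \ge \lambda$ receives either $0$, $p_{e_i}$, $\sum_{r=0}^{i} p_{e_r}$, or $-\sum_{r=0}^{i-1} p_{e_r}$; all of these prefix sums are already available from the sign-sequence computation, so writing down $p'$ costs $O(|E|)$ as well. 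The only subtlety is that because we keep loops and parallel edges, $k$ can be as large as $|E|$, but never larger, so all these bounds are genuinely $O(|E|)$ per invocation.

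Then I would address lines~16--17, the reconstruction of $x^*_u$ via $\varphi(x')$. Here one must compute $\mu(x') = \mu$, the largest index $i \in \{0,\dots,k\}$ with $x'_v = 1$ for all $v \in e_i \setminus \{u\}$. Naively checking each of the $\le |E|$ candidate edges against the current partial solution would cost $O(|V|)$ per edge, i.e.\ $O(|V||E|)$, which is too slow; this is the step I expect to be the main obstacle. The fix is to exploit the chain structure $e_1 \subseteq \cdots \subseteq e_k$: sort the edges of the chain by cardinality once (this is the source of the $|V|\log|E|$ term — there are $O(|E|)$ edges, each of size $O(|V|)$, and a comparison-based sort plus the bookkeeping to walk up the chain incrementally costs $O(|V|\log|E|)$), and then sweep from $e_0$ upward, maintaining whether all of $e_i \setminus \{u\}$ is set to $1$ by only inspecting the nodes newly introduced when passing from $e_{i-1}$ to $e_i$; since the chain is nested, each node of $V$ is inspected $O(1)$ times across the whole sweep, giving $O(|V|\log|E| + |E|)$ for this step. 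Once $\mu$ is known, reading off $\varphi(x')$ from membership of $\mu$ in $\pos \cup \posneg$ versus $\negat \cup \negpos$ is $O(1)$, and copying $x'$ into $x^*$ on the remaining nodes (lines~9 or~15) is $O(|V|)$.

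Finally I would assemble the pieces: one invocation costs $O(\tau + |E| + |V|\log|E|)$, the base case $|V|=1$ (lines~1--5, line~17) costs $O(1)$, and there are exactly $|V|$ invocations along the single chain of recursive calls, so the total is $O(|V|(\tau + |E| + |V|\log|E|))$ arithmetic operations. For strong polynomiality I would additionally remark that every number manipulated is a subset-sum of entries of the integer profit vector $p$, hence has bit-size polynomial in the input size, so the arithmetic operations are on numbers of polynomially bounded magnitude; combined with the fact that the \emph{number} of operations depends only on $|V|$, $|E|$, and $\tau$ (all combinatorial quantities), the algorithm is strongly polynomial in the sense of \cite{SchBookIP}.
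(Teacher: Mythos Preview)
Your proposal matches the paper's proof in structure and final bound: both account line by line, multiply by the $|V|$ recursion depth guaranteed by Theorem~\ref{charact}, and close with the bit-size argument for strong polynomiality. Two small corrections are worth making. First, you omit the cost of line~11 (computing $\lambda$); the paper handles this by binary search over the ordered chain $e_1\subseteq\cdots\subseteq e_k$, each probe scanning one column of the incidence matrix in $O(|V|)$, for $O(|V|\log|E|)$ total --- this is one of the actual sources of the $|V|\log|E|$ term. Second, your attribution of $|V|\log|E|$ to ``sorting'' in the $\mu$ step is off: the chain is already ordered after line~1, so no sort is needed, and the paper obtains $O(|V|\log|E|)$ for $\mu$ by the same binary-search idea (find the largest $i$ with $e_i\setminus\{u\}$ all ones, $O(|V|)$ per probe). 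Your incremental sweep would also work, but its cost is $O(|V|+|E|)$ only once the increments $e_i\setminus e_{i-1}$ are available, and extracting those from the incidence matrix is itself $O(|V|\log|E|)$. Finally, a minor slip: in a node--edge incidence matrix, removing the node $u$ means dropping a \emph{row}, not a column.
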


\begin{proof}
We first examine the number of arithmetic operations performed by the algorithm.
In line~1, there are at most $\tau$ operations to find a nest point $u$ and the ordered sequence of edges it belongs to, that is, $e_1 \subseteq e_2 \subseteq \dots \subseteq e_k$.
Line~2 requires $O(\abs{E})$ operations, between sums and comparisons, to compute the sets $\pos$, $\negat$, $\negpos$, $\posneg$.
In line~3, there are other $O(\abs{E})$ operations to remove $u$ from $G$ in order to construct the hypergraph $G'$, since it suffices to drop the $u$-th row from the incidence matrix.
We observe that we do not remove the columns of edges that might have become empty.
So, the incidence matrix could have some zero columns.
Line~4 takes $O(\abs{V})$ operations.
Next, there are $O(\abs{E})$ sums in the if condition in line~5.
Line~6 can be performed in constant time.
Then, line~7 requires $O(\abs{V}+\abs{E})$ operations, and line~9 takes $O(\abs{V})$ operations.
Next, finding $\lambda$ in line~11 requires $O(\abs{V}\log\abs{E})$ operations, by performing binary search on the ordered edges and checking the nodes they contain.
Consider now the construction of $p'$ in lines~12-13.
Line~12 takes $O(\abs{V}+\abs{E})$ operations. 
The profits $p'$ for the edges $e_{\lambda},\dots,e_k$ can be constructed with a total number of $O(\abs{E})$ operations.
Hence, constructing the smaller instance in both cases takes linear time.
It remains to consider the operations needed to construct $x^*$ from $x'$, see lines~15-16.
Line~15 requires $O(\abs{V})$ operations.
Now consider line~16.
Using the definition of the quantity $\varphi(x')$, it can be seen that 
the definition of $x^*_u$ requires $O(\abs{V}\log\abs{E})$ operations.
In fact it suffices to find $\mu(x')$.  

Therefore, each iteration of algorithm performs at most $\tau + O(\abs{E} + \abs{V}\log\abs{E})$ arithmetic operations. 
Moreover, we observe that \texttt{Acyclic($G,p$)} performs $\abs{V}$ iterations, thanks to Theorem~\ref{charact}.
We hence obtain that the total number of arithmetic operations performed by \texttt{Acyclic($G,p$)} is %$O(\abs{V}(\tau + \abs{E}\abs{V})) = 
$O(\abs{V}(\tau + \abs{E} + \abs{V}\log\abs{E}))$.

To prove that the algorithm \texttt{Acyclic($G,p$)} is strongly polynomial, it remains to show that any integer produced in the course of the execution of the algorithm has size bounded by a polynomial in $\abs{V}+\abs{E}+\log U$, where $U$ is the largest absolute value of the profits in the instance (see page 362 in \cite{BerTsiBook}).
The numbers that are produced by the execution of the algorithm are the profits of the smaller instances. 
The only arithmetic operations involving the profits are addition and subtraction of the original profits.
In particular, this implies that the numbers produced are integers.
Moreover, only a polynomial number of operations $p(\abs{V},\abs{E})$ occur in the algorithm since its arithmetic running time is polynomial in $\abs{V}$ and $\abs{E}$.
Then, any integer obtained at the end of the algorithm must have absolute value less than or equal to $2^{p(\abs{V},\abs{E})}U$. 
Its bit size therefore is less than or equal to $p(\abs{V},\abs{E}) + \log U$.
\end{proof}

We close this section by observing that
the overarching structure of our algorithm, where nodes are removed one at a time, resembles that of the \emph{basic algorithm} for pseudo-Boolean optimization, which was first defined in the sixties~\cite{HamRosRud63,HamRud68}.
Except for this similarity, the two algorithms are entirely different.
For example, in the basic algorithm nodes can be removed in any order, but the running time can be exponential.
On the other hand, in our algorithm the node to be removed must be a nest point in order for the algorithm to be correct.
In particular, this allows us to define the updated profits and it is key in achieving a polynomial running time. 
In~\cite{CraHanJau90} the authors show that, if nodes are removed according to a ``$k$-perfect elimination scheme'', the basic algorithm runs in polynomial time for hypergraphs whose 
co-occurrence graph has fixed treewidth.
However, analyzing the laminar hypergraph discussed after the statement of Theorem~\ref{theobeta} in Section~\ref{sec: contrib}, it is simple to see that the basic algorithm does not run in polynomial time over $\beta$-acyclic hypergraphs, under \emph{any} choice of the node to be removed.

\section{Hardness for $\alpha$-acyclic hypergraphs}\label{app: alpha}

In this section, we describe the intractability results for \eqref{pb} over $\alpha$-acyclic instances, thereby showing Theorem~\ref{NPhard-intro}.
In order to prove these results, we will use polynomial reductions from Max-Cut and Simple Max-Cut to \eqref{pb}.
We recall that Max-Cut can be formulated as 
\begin{align*}
\max & \quad \sum_{\{u,v\} \in E} w_{\{u,v\}}(x_u + x_v - 2x_u x_v) \\
\text{s.t. } &\quad x \in \{0,1\}^V ,  
\end{align*}
where $G= (V,E)$ is the graph representing the instance of Max-Cut and $w \in \mathbb{Z}^E_+$ \cite{ConCorZamBook}.

Similarly to the $\beta$-acyclic case, we apply the idea of removing nodes and edges from a hypergraph.
Here, we will use it to show that the instances obtained via the polynomial reductions from Max-Cut and Simple Max-Cut to \eqref{pb} are represented by $\alpha$-acyclic hypergraphs.
%Before describing such reductions, we present a useful characterization of $\alpha$-acyclic hypergraph.
%We remind the reader that the operation of removing nodes and edges from a hypergraph is explained in Section~\ref{sec 1.2}.
Now, we are ready to describe a simple polynomial reduction of Max-Cut to \eqref{pb}.

\begin{proposition}\label{reduction}
Assume that an instance of Max-Cut is represented by a graph $G' = (V,E')$ and a weight vector $w \in \mathbb{Z}^{E'}_{+}$.
Then, there exists a polynomial-time reduction from Max-Cut to \eqref{pb}, where the instance of \eqref{pb} is represented by a hypergraph $G = (V,E)$ with profit vector $p \in \mathbb{Z}^{V \cup E}$ such that:
\begin{enumerate}[label=(c\arabic*)]
\item 
\label{ass: c1}
$G$ is $\alpha$-acyclic;
\item
\label{ass: c2}
all edges in $E$ have cardinality either two or $\abs{V}$;
\item
\label{ass: c3}
all edges $e \in E$ such that $\abs{E} = 2$ have profit $p_e = -2 w_e$, all edges $e \in E$ such that $\abs{E} = \abs{V}$ have profit $p_e = 0$, and all nodes $v \in V$ have profit $p_v = \sum_{u \in V \mid \{u,v\} \in E} w_{\{u,v\}}$;
\item 
\label{ass: c4}
every vector in $\{0,1\}^V$ yields the same objective value in the two problems.
\end{enumerate}
\end{proposition}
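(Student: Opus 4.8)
The plan is to construct the hypergraph $G$ explicitly from $G'$ and then verify the four properties. Given the Max-Cut instance $(G', w)$ with $G' = (V, E')$, I would define $G = (V, E)$ where $E$ consists of all the edges of $E'$ (each a pair $\{u,v\}$) together with one additional edge $V$ itself, i.e. $E := E' \cup \{V\}$. The profit vector is then dictated by \ref{ass: c3}: set $p_e := -2 w_e$ for each $e \in E'$, set $p_V := 0$ for the new edge of cardinality $|V|$, and set $p_v := \sum_{u \in V \mid \{u,v\} \in E'} w_{\{u,v\}}$ for each node $v$. Since this construction only inspects the adjacency structure and weights of $G'$, it is clearly computable in polynomial time; this, together with \ref{ass: c4}, is exactly what ``polynomial-time reduction'' requires, so the bulk of the work is verifying \ref{ass: c1}, \ref{ass: c2}, \ref{ass: c4} (and \ref{ass: c3}, which holds by definition).

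\textbf{Verifying \ref{ass: c2} and \ref{ass: c4}.} Property \ref{ass: c2} is immediate: every edge of $E'$ has cardinality two, and the one extra edge has cardinality $|V|$. For \ref{ass: c4}, I would write out the objective of \eqref{pb} on $G$ for an arbitrary $x \in \{0,1\}^V$:
\begin{equation*}
\sum_{v \in V} p_v x_v + \sum_{e \in E} p_e \prod_{v \in e} x_v = \sum_{v \in V} \Bigl( \sum_{u \,:\, \{u,v\} \in E'} w_{\{u,v\}} \Bigr) x_v + \sum_{\{u,v\} \in E'} (-2 w_{\{u,v\}}) x_u x_v + 0 \cdot \prod_{v \in V} x_v.
\end{equation*}
Regrouping the first sum edge-by-edge — each edge $\{u,v\} \in E'$ contributes $w_{\{u,v\}}$ to the coefficient of $x_u$ and of $x_v$ — this equals $\sum_{\{u,v\} \in E'} w_{\{u,v\}} (x_u + x_v - 2 x_u x_v)$, which is exactly the Max-Cut objective. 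The term from the big edge $V$ vanishes because $p_V = 0$, so it contributes nothing to any objective value; this is the whole point of adding a zero-weight edge.

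\textbf{Verifying \ref{ass: c1}, the $\alpha$-acyclicity.} This is the step I expect to be the main obstacle, and I would handle it using the characterization in Theorem~\ref{charactalphaacyclic}: $G$ is $\alpha$-acyclic iff the empty hypergraph can be reached by repeatedly (1) deleting a node lying in at most one edge, and (2) deleting an edge contained in another edge. The key observation is that the big edge $V$ contains \emph{every} other edge of $G$, since every $e \in E'$ satisfies $e \subseteq V$. So I would first apply operation (2) to delete each $e \in E'$ one at a time — each is contained in $V$, and deleting some of them does not affect containment of the remaining ones in $V$. After all pair-edges are gone, the only edge left is $V$, and now every node $v \in V$ belongs to at most one edge (namely $V$), so operation (1) applies: delete the nodes of $V$ one by one. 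Deleting a node from the edge $V$ shrinks it but keeps it the unique remaining edge, so every subsequent node still lies in at most one edge. After removing all nodes we are left with a single empty-or-nonexistent edge on no vertices; a final cleanup (the edge $V$ has become empty / can be dropped) yields $(\emptyset, \emptyset)$. Hence $G$ is $\alpha$-acyclic. I would double-check the edge cases: if $G'$ has parallel structure this does not arise since $E'$ is a simple graph's edge set, and if $|V|$ is small the same argument still runs; one should also note that this argument is essentially the one cited from \cite{BeFaMaYa83} for the hypergraph ``a graph plus one full edge.'' This completes all four items.
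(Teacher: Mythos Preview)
Your proof is correct and follows essentially the same approach as the paper: construct $E := E' \cup \{V\}$ with the profit vector dictated by \ref{ass: c3}, verify \ref{ass: c4} by expanding the objective, and establish \ref{ass: c1} via Theorem~\ref{charactalphaacyclic} by first deleting all pair-edges (each contained in $V$) and then deleting the nodes. Your treatment is in fact slightly more explicit than the paper's, which simply asserts that the objective values coincide without writing out the regrouping argument.
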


\begin{proof}
Let $I$ be an instance of Max-Cut.
We denote by $G' = (V, E')$ its associated graph, and by $w$ the weight vector for the edges in $E'$.
Let $\bar e$ be a new edge defined as $\bar e := V$.
At this point, we construct an instance $J$ of \eqref{pb}. The hypergraph representing the instance is $G = (V, E)$, where $E := E' \cup \{ \bar e \}$. 
It is easy to see that it satisfies \ref{ass: c2} by construction.
The profit vector of $J$ is $p \in \mathbb{Z}^{V \cup E}$, which is defined as
\begin{equation*}
p_i := \begin{cases}
\displaystyle\sum_{u \in V \mid \{u,v\} \in E } w_{\{u,v\}}, \quad  & \text{if } i = v \in V  \\
-2 w_{\{u,v\}}, \quad  & \text{if } i = \{u,v\} \in E'  \\
0,  & \text{if } i = \bar e \ .
\end{cases}
\end{equation*}
Clearly the vector $p$ satisfies condition \ref{ass: c3}.
Furthermore, it is immediate to see that solving $I$ is equivalent to $J$. 
In particular, the set of feasible solutions is $\{0,1\}^V$ for both Max-Cut and \eqref{pb}.
Moreover, the objective value obtained by any binary vector in $J$ coincides with the objective value yielded by the same vector in $I$.
This shows that \ref{ass: c4} holds.
It remains to prove that also \ref{ass: c1} is satisfied.
Hence, we show that $G$ is $\alpha$-acyclic.
We observe that we obtain the empty hypergraph $(\emptyset, \emptyset)$ from $G$ by first removing all edges $e \in E'$, and then by removing all nodes.
Therefore, by Theorem~\ref{charactalphaacyclic} we can conclude that the hypergraph $G$ is $\alpha$-acyclic.
\end{proof}

Next, we present the first hardness result, obtained by reducing Simple Max-Cut to \eqref{pb} using the polynomial reduction presented in Proposition~\ref{reduction}. 
Simple Max-Cut is the special case of Max-Cut, in which the weight vector $w$ is restricted to be the vector of all ones.
This problem has been shown to be strongly NP-hard in \cite{GarJohSto}.
%We observe that the polynomial reduction of Max-Cut to \eqref{pb} works also for Simple Max-Cut, since this is a special case.
%The following theorem follows from  applied to Simple Max-Cut.

\begin{theorem}\label{NPhard}
Solving \eqref{pb} is strongly NP-hard, even if $G = (V,E)$ is a hypergraph that satisfies conditions \ref{ass: c1}, \ref{ass: c2}, and
\begin{enumerate}%[label=(c\arabic*)]
\item[(c3')] %\tag{(c3')}
%\label{ass: c3'}
all edges $e \in E$ such that $\abs{e} = 2$ have profit $p_e = -2$, all edges $e \in E$ such that $\abs{e} = \abs{V}$ have profit $p_e = 0$, and all nodes $v \in V$ have profit $p_v = \abs{\{e \in E \mid v  \in e, \ \abs{e} = 2 \} }$
%$p_v = \abs{a}$
%\nota{issue with cardinality symbols!}
%$p_v = |\{ e \in E \mid v  \in e, \ \abs{E} = 2 \}|$.
\end{enumerate}
\end{theorem}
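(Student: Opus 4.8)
The plan is to obtain Theorem~\ref{NPhard} as an immediate consequence of Proposition~\ref{reduction}, instantiated on the subclass of Max-Cut instances in which every edge weight equals one, i.e.\ Simple Max-Cut. First I would apply Proposition~\ref{reduction} with $w = \mathbf{1}$. This produces, in polynomial time, an instance of \eqref{pb} on a hypergraph $G = (V,E)$ that is $\alpha$-acyclic by~\ref{ass: c1} and all of whose edges have cardinality $2$ or $|V|$ by~\ref{ass: c2}; hence conditions~\ref{ass: c1} and~\ref{ass: c2} in the statement hold verbatim.

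Next I would check that condition~\ref{ass: c3} of Proposition~\ref{reduction} collapses to (c3') when $w \equiv 1$. Indeed, plugging $w_e = 1$ for every $e$ into~\ref{ass: c3} gives $p_e = -2w_e = -2$ for each edge of cardinality two, $p_e = 0$ for the unique edge of cardinality $|V|$, and $p_v = \sum_{u \in V \mid \{u,v\} \in E} w_{\{u,v\}} = |\{u \in V \mid \{u,v\} \in E\}| = |\{e \in E \mid v \in e,\ |e| = 2\}|$, which is precisely (c3'). So the constructed instance indeed lies in the claimed subclass.

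Then I would invoke~\ref{ass: c4}: every $x \in \{0,1\}^V$ yields the same objective value in the Simple Max-Cut instance and in the \eqref{pb} instance, so an optimal solution of the latter is an optimal solution of the former. Since Simple Max-Cut is strongly NP-hard~\cite{GarJohSto} and the reduction runs in polynomial time, solving \eqref{pb} over hypergraphs satisfying~\ref{ass: c1},~\ref{ass: c2}, and (c3') is NP-hard.

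The only point that needs a moment's attention — the ``hard part'', such as it is — is the qualifier \emph{strongly}: strong NP-hardness transfers only if the reduction creates no numbers of superpolynomial magnitude. Here this is immediate, since in the constructed instance every profit has absolute value at most $|V|$: the edge profits are $0$ or $-2$, and each node profit equals a degree in $G$, hence is at most $|V| - 1$. Thus the size of the \eqref{pb} instance is polynomial in $|V| + |E|$ even under a unary encoding, and the strong NP-hardness of Simple Max-Cut carries over, completing the proof. No further estimates are required.
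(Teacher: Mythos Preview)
Your proposal is correct and follows essentially the same approach as the paper: apply Proposition~\ref{reduction} to Simple Max-Cut (i.e., $w = \mathbf{1}$), observe that condition~\ref{ass: c3} specializes to (c3'), and inherit strong NP-hardness from~\cite{GarJohSto}. You are in fact slightly more careful than the paper, since you explicitly check that all profits are bounded in magnitude by $|V|$ so that strong NP-hardness transfers; the paper leaves this implicit.
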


Observe that condition (c3') coincides with condition \ref{ass: c3}, when we adjust the latter to Simple Max-Cut.

\smallskip 

Next, we present the hardness of approximation result.
We start by defining the concept of $r$-approximation, for any maximization problem $P$, where $r \in [0, 1]$.
Let \emph{ALG} be an algorithm that returns a feasible solution to $P$ yielding objective value \emph{ALG(I)}, for every instance $I$ of $P$.
Now, let us fix $I$.
We denote by $l(I)$ the minimum value that the objective function of $I$ can achieve on all feasible points, and by \emph{OPT(I)} the optimum value of that instance.
Then, we say that an algorithm \emph{ALG} is a \emph{$r$-approximation} for $P$ if, for every instance $I$ of $P$, we have that $\frac{ALG(I) - l(I)}{OPT(I) - l(I)} \geq r$.
In particular, when $P$ is Max-Cut, we have that $l(I) = 0$ for all instances $I$.
In \cite{TreSorSudWil} the authors show that it is NP-hard to obtain an $r$-approximation algorithm for Max-Cut, for $r > \frac{16}{17}$.
The next result then follows by reducing Max-Cut to \eqref{pb} using the reduction in Proposition~\ref{reduction}.

\begin{theorem}\label{thm: hardapprox}
It is NP-hard to obtain an $r$-approximation algorithm for \eqref{pb}, with $r > \frac{16}{17}$, even if the instance of \eqref{pb} satisfies conditions \ref{ass: c1}, \ref{ass: c2}, \ref{ass: c3}, for some vector $w \in \mathbb{Z}^E_+$.
\end{theorem}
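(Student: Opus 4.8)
The plan is to obtain Theorem~\ref{thm: hardapprox} as an immediate consequence of the reduction of Proposition~\ref{reduction} together with the known inapproximability of Max-Cut. The key point is that the reduction of Proposition~\ref{reduction} is \emph{objective-value preserving}: by \ref{ass: c4}, the Max-Cut instance $I$ on $G'=(V,E')$ with weights $w\in\Z^{E'}_+$ and the constructed instance $J$ of \eqref{pb} share the feasible region $\{0,1\}^V$ and assign the same objective value to every $x\in\{0,1\}^V$. Hence $OPT(J)=OPT(I)$, and, since the two objective functions agree pointwise, the minimum objective values agree as well, $l(J)=l(I)$; for Max-Cut one has $l(I)=0$, so $l(J)=0$. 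Moreover $J$ satisfies \ref{ass: c1}, \ref{ass: c2}, \ref{ass: c3}, the latter with respect to the vector $w\in\Z^E_+$ obtained by extending the Max-Cut weights by $0$ on the edge $\bar e=V$.

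Next I would argue by contradiction. Suppose $ALG$ is a polynomial-time $r$-approximation for \eqref{pb} restricted to instances satisfying \ref{ass: c1}, \ref{ass: c2}, \ref{ass: c3}, with $r>\frac{16}{17}$. Given a Max-Cut instance $I$, compute $J$ via Proposition~\ref{reduction} in polynomial time and run $ALG$ on $J$ to get $x\in\{0,1\}^V$. Writing $\obj_I$ and $\obj_J$ for the two objective functions, the identity $\obj_I\equiv\obj_J$ on $\{0,1\}^V$ together with $l(I)=l(J)=0$ and $OPT(I)=OPT(J)$ gives
\[
\frac{\obj_I(x)-l(I)}{OPT(I)-l(I)}=\frac{\obj_J(x)-l(J)}{OPT(J)-l(J)}\ \ge\ r .
\]
Thus $ALG$ composed with the reduction is a polynomial-time $r$-approximation for Max-Cut with $r>\frac{16}{17}$, which is NP-hard unless P $=$ NP by the known hardness threshold for Max-Cut \cite{TreSorSudWil}. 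This contradiction establishes Theorem~\ref{thm: hardapprox}.

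The hard part is essentially nonexistent on the \eqref{pb} side: all the real work has already been done in Proposition~\ref{reduction}, whose point was precisely to land inside the $\alpha$-acyclic, structurally restricted family of instances. The only subtlety to check carefully is that the affine normalization by $l(\cdot)$ in the definition of an $r$-approximation is respected by the reduction; this holds here because the reduction preserves objective values exactly (not merely up to a positive affine transformation), so both $OPT$ and $l$ are literally unchanged, and the approximation ratio transfers verbatim. The only external ingredient is matching the constant $\frac{16}{17}$ to the cited inapproximability result for Max-Cut.
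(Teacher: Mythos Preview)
Your proposal is correct and follows essentially the same route as the paper: assume a hypothetical $r$-approximation for the restricted class of \eqref{pb} instances, feed it the instance $J$ produced from a Max-Cut instance $I$ via Proposition~\ref{reduction}, and use \ref{ass: c4} together with the $\frac{16}{17}$ hardness of Max-Cut from \cite{TreSorSudWil} to derive a contradiction. Your version is in fact slightly more explicit than the paper's in verifying that $l(J)=l(I)=0$ and that the approximation ratio transfers verbatim, but the underlying argument is the same.
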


%\begin{proof}
%In \cite{TreSorSudWil} the authors show that if there is an $r$-approximation algorithm for Max-Cut, for $r > \frac{16}{17}$, then P = NP.
%Next, assume P $\neq$ NP.
%For the sake of contradiction let us assume that there exists a $r$-approximation, with $r > \frac{16}{17}$, for the instances of \eqref{pb} that satisfy conditions \ref{ass: c1}, \ref{ass: c2}, \ref{ass: c3}, for some vector $w \in \mathbb{Z}^E_+$.
%Let this algorithm be denoted by $ALG'$.
%We claim that $ALG'$ is an $r$-approximation for Max-Cut.
%In fact, consider an instance $I$ of Max-Cut. 
%Then, we can construct an instance $J$ of \eqref{pb} using Proposition~\ref{reduction}, and apply $ALG'$ to it, since $J$ satisfies conditions \ref{ass: c1}, \ref{ass: c2}, \ref{ass: c3} by construction.
%Thanks to condition \ref{ass: c4}, it follows that this procedure provides a $r$-approximation for Max-Cut, with $r > \frac{16}{17}$.
%However, this is a contradiction, given that we are assuming that P $\neq$ NP.
%\end{proof}

We observe that the bound on $r$ can be further strengthened if we assume the validity of the Unique Games Conjecture, first formulated in \cite{Kho02}.
In fact, Theorem~1 in \cite{KhoKinMosODo} states %tells 
that it is NP-hard to approximate Max-Cut to within a factor greater than $\alpha_{\text{GW}} \approx 0.878$, granted that the %conjectures %P $\neq$ NP \nota{rimuovere P $\neq$ NP?}, 
Unique Games Conjecture and the Majority Is Stablest Conjecture are true.
The constant $\alpha_{\text{GW}}$ was originally defined in \cite{GoeWil}, where the authors provide an $\alpha_{\text{GW}}$-approximation algorithm for Max-Cut.
Lastly, we observe that the Majority Is Stablest Conjecture was proved to be correct in \cite{MosODoOle}, and therefore this stronger inapproximability result now only relies on the %P $\neq$ NP and the 
Unique Games Conjecture. %s.

\section{Reduction scheme for general hypergraphs}
\label{sec: computation}

\begin{figure}[htbp]%[H]%[b]%[htbp]%[H]
\centering
\includegraphics[width=.8\textwidth]{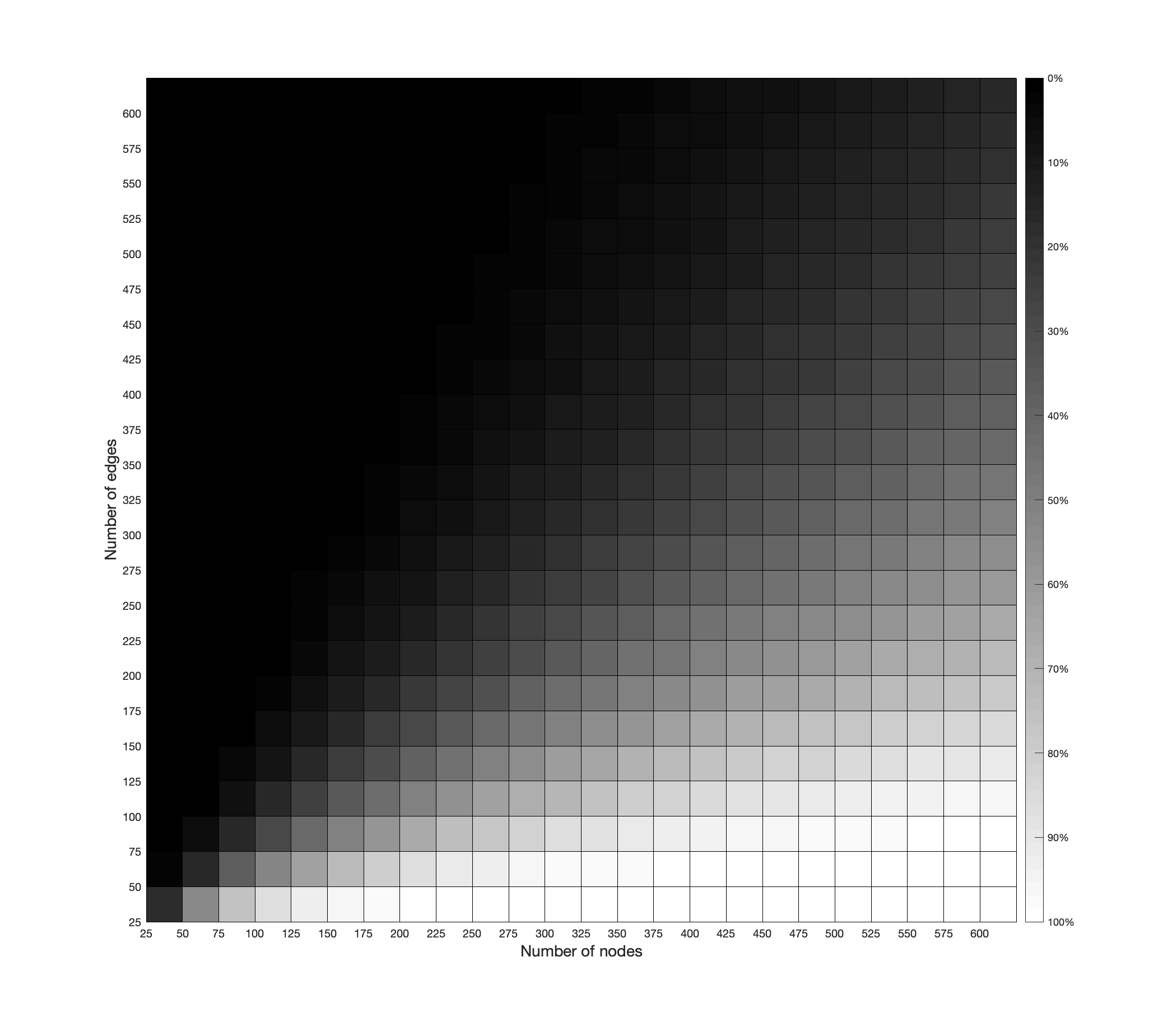}
\caption{Percentage of removed nodes in hypergraphs as a function of $\abs{V}$ and $\abs{E}$.}\label{comp_hyper}
\end{figure}

\begin{figure}[t]%[htbp]%[H]%[b]%[htbp]%[H]
\centering
\includegraphics[width=0.8\textwidth]{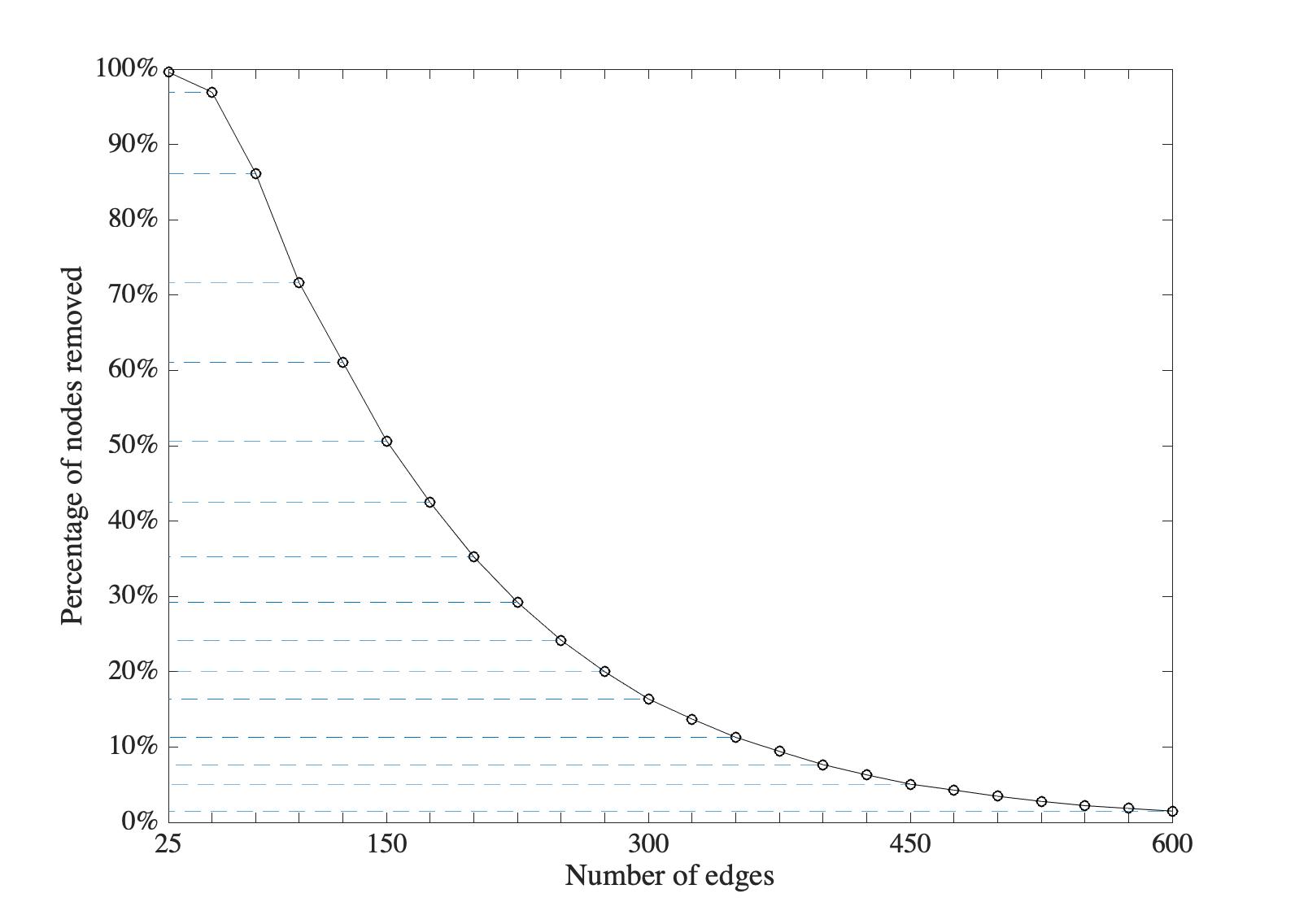} %[height=0.346\textheight][width=\textwidth]{}
%\caption{Percentage of removed nodes in hypergraphs as a function of $\abs{E}$ when $n = 300$.}
%\vspace{-0.2em}
\caption{Percentage of removed nodes as a function of $\abs{E}$ when $n = 300$.}\label{new_slice_hyper}
\end{figure}

We observe that, even if our algorithm is not able to solve instances over hypergraphs that contain $\beta$-cycles, it is still possible to use it as a reduction scheme.
In particular, we can iteratively remove nest points, which leads to a decrease in the number of nodes, and possibly edges, of the hypergraph until there are no nest points left.
If we are able to obtain an optimal solution to the smaller problem, we can then use the rules outlined in the algorithm to construct an optimal solution to the original problem.

In order to better assess if our reduction scheme could be useful in practice, we ran some computational experiments.
We studied the reduction scheme on random instances, as it is commonly done in the literature \cite{BucRin07,CraRod17,dPKhaSah19}.
For every instance, we computed the percentage of removed nodes.
First, we explain the setting of our experiments.
We chose the setting of \cite{CraRod17}, \ie we decide the number of nodes $\abs{V}$ and of edges $\abs{E}$ of the hypergraph representing the instance, but we do not make any restriction on the rank of the hypergraph.
 We recall that the \emph{rank} of a hypergraph is the maximum cardinality of any of its edges.
For every edge, its cardinality $c$ is chosen from $\{2,\dots,\abs{V}\}$ with probability equal to $2^{1-c}$.
As explained in \cite{CraRod17}, the purpose of this choice is to model the fact that a random hypergraph is expected to have more edges of low cardinality than high cardinality.
Then, once $c$ is fixed, the nodes of the edge are chosen uniformly at random in $V$ with no repetitions.
We also make sure that there are no parallel edges in the produced hypergraph.
This will be useful in the interpretations of the results, as we explain later in the section.
The parameters $\abs{V}$ and $\abs{E}$ have values in the set $\{ 25$, $50$, $75$, \dots, $600 \}$. 
For every pair $(\abs{V},\abs{E})$ we made $250$ repetitions and computed the percentage of removed nodes.
Then, we took the average of these percentages.
The results of our simulations are shown in Figure~\ref{comp_hyper}. 
The values on the $x$ axis correspond to the number of nodes of the hypergraph, whereas the values on the $y$ axis represent the number of edges.
The lighter the cell, the more nodes are removed for instances with those values of $n$ and $m$.
A legend can be found to the right of the grid.

From the results, we noticed that the percentage of the removed nodes is related to the value of the ratio $\abs{E}/\abs{V}$, where $G = (V, E)$ is the hypergraph representing the instance.
From Figure~\ref{comp_hyper}, it is apparent that the smaller is the ratio $\abs{E}/\abs{V}$, the more effective our algorithm is.
In particular, we observe that if $\abs{E}/\abs{V} = 1$, then the average of nodes removed is 16.72\%.
However, when $\abs{E}/\abs{V} =  1/2$, this percentage is roughly 50\%, and if $\abs{E}/\abs{V} =  1/4$ our algorithm removes on average 86\% of the original nodes.
Additional values can be extracted from Figure~\ref{new_slice_hyper}, which captures the trend of this percentage as a function of $\abs{E}$.
In this figure, the number of nodes $\abs{V}$ is set to 300. 
We see that the reduction scheme is particularly useful whenever $\abs{E}/\abs{V} \leq 1$, \ie when the number of edges is bounded by the number of nodes.
Furthermore, we observe that a large subset of the hypergraphs with $\abs{E}/\abs{V} \leq 1$ have a highly non-trivial structure, since they have a huge connected component with high probability. 
In fact, the largest connected component of $G$ is of order $\abs{V}$ whenever the fraction $\abs{E}/\abs{V}$ is asymptotic to a constant $c$ such that $c > 1/2$.
This follows from \cite{ErdRen} once we observe that each edge of a hypergraph connects at least as many nodes as an edge in a graph.
We remark that the authors in \cite{ErdRen} do not allow parallel edges, and this is why we introduced this requirement for our instances.

For denser hypergraphs, i.e., hypergraphs with $\abs{E}/\abs{V} > 1$, our procedure does not work as well, and this can be explained by the fact that, for these hypergraphs, it is more unlikely that a node would be able to satisfy the definition of nest point.
For non-random instances, it should be noted that the outcome of our reduction scheme depends heavily on the structure of the specific instance.

%Lastly, we remark that the reduction scheme can be applied also to quadratic instances, where the corresponding hypergraph is simply a graph.
%We found that the behavior of the percentages of removed nodes is similar to the one in the hypergraph setting, even if the values of the percentages of removed nodes are generally higher than in the more general case.
Lastly, we remark that the reduction scheme can be applied also to quadratic instances, where the corresponding hypergraph is simply a graph, and where nest points are leaves.
We wanted to check if the computational experiments would lead to comparable findings when $G$ is actually a graph.
This is indeed the case, as Figure~\ref{computations graphs} indicates.
\begin{figure}[htbp]%[H]%[b]%[htbp]%[H]
\centering
\label{computations}
\includegraphics[width=.8\textwidth]{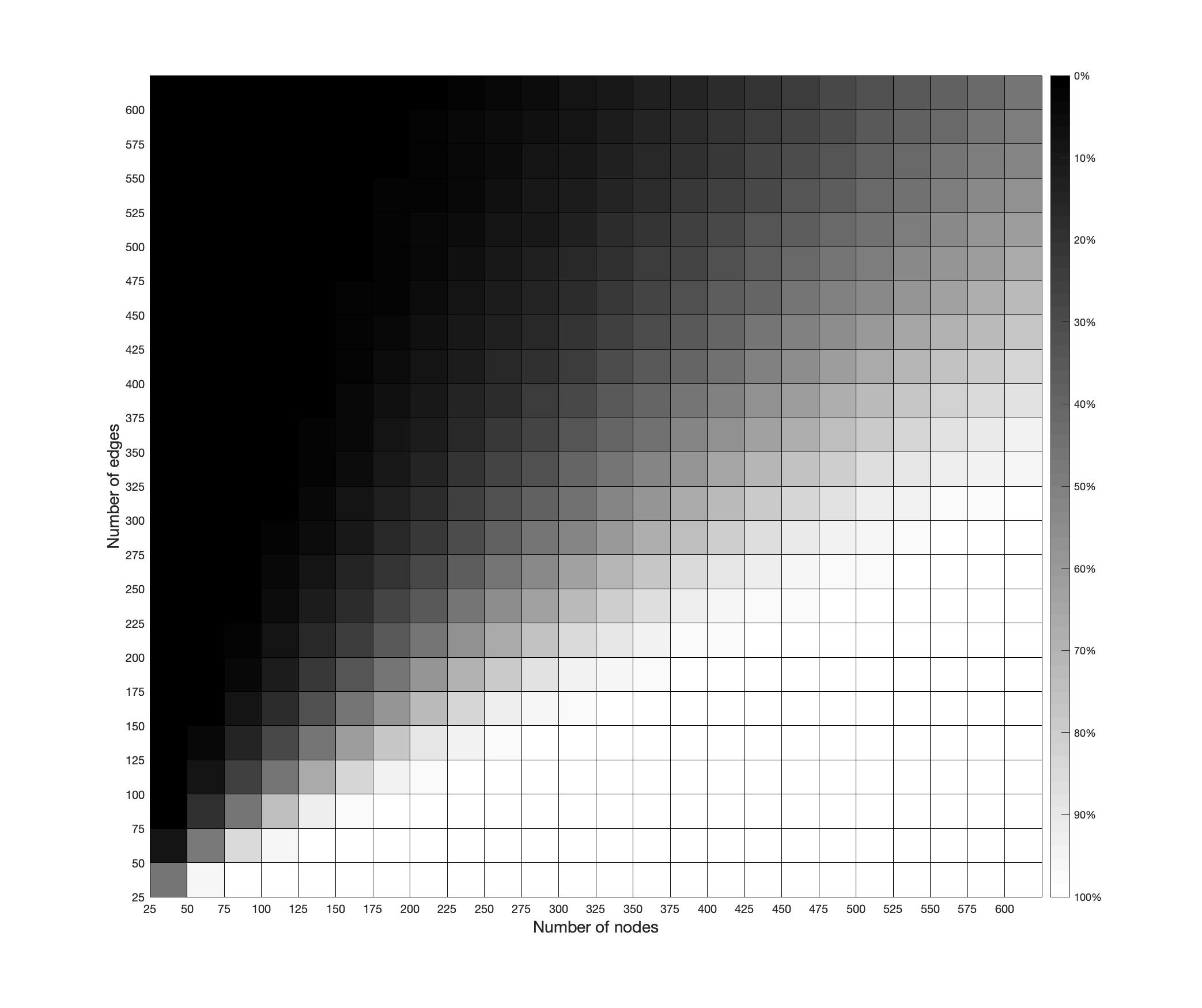}
\caption{Percentage of removed nodes in graphs as a function of $\abs{V}$ and $\abs{E}$.}\label{computations graphs}
\end{figure}
In fact, the behavior of the percentages of removed nodes is similar to the one in the hypergraph setting, even if the shift between the light and dark regions is sharper.
In order to unveil better this more radical performance, we look again at the average of the percentages of nodes removed as a function of the ratio $\abs{E}/\abs{V}$.
In particular, we look at the same values of this ratio that we explicitly mentioned in the hypergraph setting, that is for $\abs{E}/\abs{V} \in \left\{1, \frac12, \frac14 \right\}$.
From the computational experiments we see that these values are respectively $45.63\%$, $97.56\%$, and $99.88\%$. 
To further highlight this behavior in the graph setting, we fix $\abs{V} = 300$ and study the average of the percentage of removed nodes as a function of $\abs{E}$.
From this analysis, the reader can derive additional values corresponding to the ratio $\abs{E}/\abs{V}$ from Figure~\ref{slice_graph}.  
\begin{figure}[H]%[hb]%[htbp]%[H]
\centering
\includegraphics[width=\textwidth]{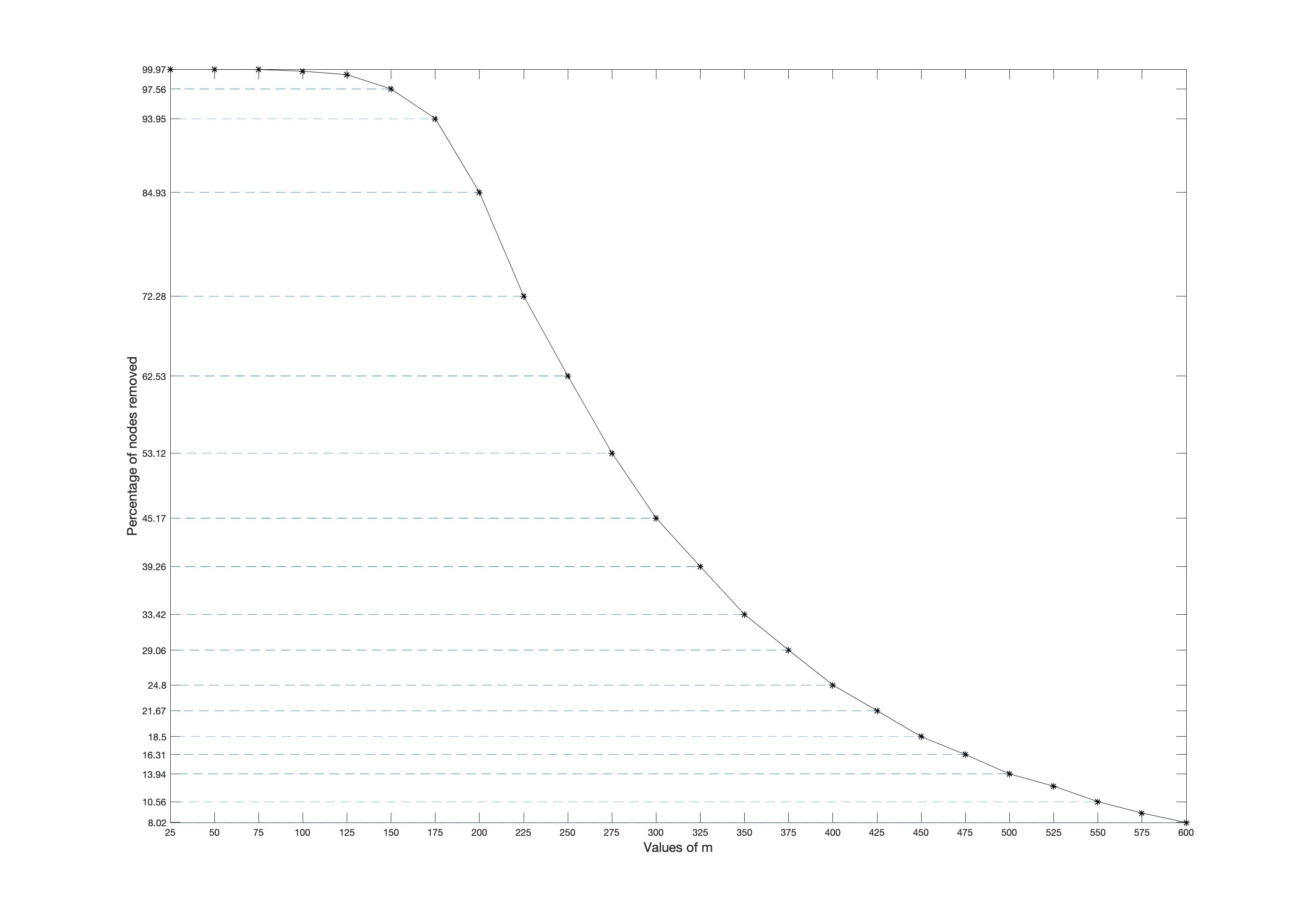}
\caption{Percentage of removed nodes in graphs as a function of $\abs{E}$ when $n = 300$.}\label{slice_graph}
\end{figure}

\bigskip
\noindent
\textbf{Funding:} A.~Del~Pia is partially funded by ONR grant N00014-19-1-2322. 
S.~Di~Gregorio is partially supported by NSF award CMMI-1634768.
Any opinions, findings, and conclusions or recommendations expressed in this material are those of the authors and do not necessarily reflect the views of ONR or NSF.

%\section*{Declarations}
%
%\begin{itemize}
%\item Funding: A.~Del~Pia is partially funded by ONR grant N00014-19-1-2322. 
%S.~Di~Gregorio is partially supported by NSF award CMMI-1634768.
%Any opinions, findings, and conclusions or recommendations expressed in this material are those of the authors and do not necessarily reflect the views of ONR or NSF.
%%Di Gregorio did not receive support from any organization for the submitted work.
%\item Competing interests: The authors have no relevant financial or non-financial interests to disclose.
%\item Data availability: The construction of the data has been thoroughly explained at the beginning of Section~\ref{sec: computation}, so that it is extremely easy to replicate the data set and the experiments.
%\item Authors' contribution: All authors contributed equally to the research and the writing of the manuscript.
%\end{itemize}
%

\bibliographystyle{plainurl}

%\newpage
%\bibliography{biblio}

%\newpage
%
%
%\begin{appendices}
%\end{appendices}

\end{document}